\tikzstyle{decision}=[diamond,draw]
\tikzstyle{line}=[draw]
\tikzstyle{elli}=[draw,ellipse]
\tikzstyle{arrow} = [thick]
\newcommand{\nn}{\nonumber}
\newcommand{\R}{\mathbf{R}}
\newcommand{\ra}{\rightarrow}
\newcommand{\thetaa}{\theta_{A}}
\newcommand{\thetah}{\theta_{H}}
\newcommand{\thetahn}{\theta^{\min}_{H}}
\newcommand{\thetahx}{\theta^{\max}_{H}}
\newcommand{\thetahd}{\theta^d_{H}}
\newtheorem{theorem}{Theorem}
\newtheorem{assumption}{Assumption}
\newtheorem{example}{Example}
\definecolor{color1}{rgb}{0.301961,0.686275,0.290196}
\begin{document}
\title{Performance Tuning of Hadoop MapReduce: A Noisy Gradient Approach}

\author{Sandeep Kumar}
\author{Sindhu Padakandla}
\author{Chandrashekar L}
\author{Priyank Parihar}
\author{K. Gopinath}
\author{Shalabh Bhatnagar}

\affil{Department of Computer Science and Automation\\
 Indian Institute of Science \\
\email{\small\{sandeep007734, sindhupr, chandrurec5, iisc.csa.priyank.parihar\}@gmail.com, \{gopi, shalabh\}@csa.iisc.ernet.in }
}

%\author{
%Sandeep Kumar, Sindhu Padakandla, Chandrashekar L, Priyank Parihar, K. Gopinath and Shalabh Bhatnagar}
%\affil{Department of Computer Science and Automation, Indian Institute of Science}
%\small{\{sandeep007734, sindhupr, chandrurec5, iisc.csa.priyank.parihar}\}@gmail.com, \small{\{gopi, shalabh\}@csa.iisc.ernet.in }

%\subtitle{[Extended Abstract]
%\titlenote{A full version of this paper is available as
%\textit{Author's Guide to Preparing ACM SIG Proceedings Using
%\LaTeX$2_\epsilon$\ and BibTeX} at
%\texttt{www.acm.org/eaddress.htm}}}
\maketitle
\begin{abstract}
Hadoop MapReduce is a framework for distributed storage and processing of large
datasets that is quite popular in big data analytics. 
It has various configuration parameters 
(knobs) which play an important role in deciding the performance i.e., the 
execution time of a given big data processing job. 
Default values of these parameters do not always 
result in good performance and hence it is important to tune them. However, 
there is inherent difficulty in tuning the parameters due to two important reasons - 
firstly, the parameter search space is large and secondly, there are cross-parameter interactions.
Hence, there is a need for a dimensionality-free method which can 
automatically tune the configuration parameters by taking into account the cross-parameter
dependencies. In this paper, we propose a novel Hadoop parameter tuning methodology, 
based on a noisy gradient algorithm known as the simultaneous perturbation stochastic approximation (SPSA).
The SPSA algorithm tunes the parameters by directly observing the 
performance of the Hadoop MapReduce system. The approach followed is independent of parameter 
dimensions and requires only $2$ observations per iteration while tuning. 
We demonstrate the effectiveness of our methodology in achieving good performance
on popular
Hadoop benchmarks namely \emph{Grep}, \emph{Bigram}, \emph{Inverted Index}, \emph{Word Co-occurrence}
and \emph{Terasort}.
Our method, when tested on a 25 node Hadoop cluster shows 66\% decrease 
in execution time of Hadoop jobs on an average, when compared to the default configuration. 
Further, we also observe a reduction of 45\% in execution times, when compared to prior methods.
\end{abstract}	
\keywords{Hadoop performance tuning, Simultaneous Perturbation Stochastic Approximation}
% On average - reduction compared to def settings
% is 66.090909091% averaging over all benchmarks

\section{Introduction}
\label{intro}
We are in the era of big data and huge volumes of data are generated in various 
domains like social media, financial markets, transportation and health care. 
Faster analysis of such big unstructured data is a key requirement for achieving 
success in these domains. Popular instances of such cases include distributed 
pattern-based searching, distributed sorting, web link-graph reversal, singular 
value decomposition, web access log stats, inverted index construction and 
document clustering. Extracting  hidden patterns, unknown correlations and other 
useful information is critical for making better decisions. 
Many industrial organisations like Yahoo!, Facebook, Amazon etc. need to handle
and process large volumes of data and their product success hinges on this ability.
Thus, there is a 
need for parallel and distributed processing/programming methodologies that can 
handle big data using resources built out of commodity hardware. Currently available 
parallel processing systems are database systems \cite{pavlo} like Teradata, Aster Data, Vertica etc.,
which are quite robust and are high-performance computing platforms. 
However, there is a need for a parallel processing system which
can handle large volumes of data using low-end servers and which is easy-to-use.
\textbf{MapReduce}\cite{ghemawat} is one-such programming model.

MapReduce computation over input data goes through two phases namely \emph{map} 
and \emph{reduce}. At the start of the map phase, the job submitted by the client is 
split into multiple map-reduce tasks that are to be executed by  
various worker nodes. The map phase then 
creates the key-value pairs from the input dataset according to the user defined 
map. The reduce phase makes use of the key-value pairs and aggregates according 
to user specified function to produce the output.

Apache \textbf{Hadoop}\cite{hadoop} is an open-source implementation of MapReduce written in 
Java for distributed storage and processing of very large data sets on clusters 
built using commodity hardware.
The Hadoop framework gives various parameter (knobs) that need to be tuned 
according to the program, input data and hardware resources. It is important to 
tune these parameters to obtain best performance for a given MapReduce job.
The problem of Hadoop performance being limited by the parameter configuration
was recognized in \cite{kambatla}.
Unlike SQL, MapReduce jobs cannot be modeled using a small and finite 
space of relational operators \cite{pavlo}. Thus, 
it is not straight forward to quantify the effect of these various parameters on 
the performance and hence it is difficult to compute the best parameter 
configuration apriori. In addition, difficulty in tuning these parameters also 
arises due to two other important reasons. Firstly, due to the presence of a 
large number of parameters (about $200$, encompassing a variety of functionalities) 
the search space is large and 
complex.
Secondly, there is a pronounced effect of cross-parameter interactions, i.e., 
the knobs are not independent of each other. For instance, increasing the 
parameter corresponding to map-buffer size will decrease the I/O cost, however, 
the overall job performance may degrade because sorting cost may increase (in 
quick sort, sorting cost is proportional to the size of data). The complex 
search space along with the cross-parameter interaction does not make Hadoop 
amenable to manual tuning.

The necessity for tuning of Hadoop parameters was first emphasized in 
\cite{kambatla}, which proposed a method to determine the optimum
configuration given a set of computing resources.
Recent efforts in the direction of automatic tuning of the Hadoop parameters include 
Starfish\cite{starfish}, AROMA\cite{aroma},
MROnline\cite{mronline}, 
PPABS \cite{ppabs} and JellyFish \cite{jellyfish}. 
We observe that collecting statistical data to create virtual profiles and 
estimating execution time using mathematical  model (as in 
\cite{kambatla,starfish,jellyfish,aroma,mronline,ppabs}) requires significant level of expertise which 
might not be available always. In addition, since Hadoop MapReduce is evolving 
continuously with a number of interacting parts, the mathematical model also has 
to be updated and in the worst case well-defined mathematical model might not be 
available for some of its parts due to which a model-based approach might fail. 
Further, given the presence of cross-parameter interaction it is a good idea to 
retain as many parameters as possible (as opposed to reducing the parameters 
\cite{ppabs}) in the tuning phase.

In this paper, we present a novel tuning methodology based on a noisy gradient 
method known as the simultaneous perturbation stochastic approximation (SPSA) 
algorithm \cite{spall}. The SPSA algorithm is a black-box stochastic optimization 
technique which has been applied to tune parameters in a variety of complex 
systems. 
An important feature of the SPSA is that it 
utilizes observations from the real system as feedback to tune the 
parameters. Also, the SPSA algorithm is dimensionality free, i.e., it needs only 
$2$ or fewer observations per iteration irrespective of the number of parameters 
involved. In this paper, we adapt the SPSA algorithm to tune the 
parameters used by Hadoop to allocate resources for program execution.

\subsection{Our Contribution}
Our aim is to introduce the practitioners to a new method that is different 
in flavour from the prior methods, simple to implement and effective at the same 
time. The highlights of our SPSA based approach are as follows:
\begin{itemize}
\item \textbf{Mathematical model}: The methodology we propose utilizes the 
observations from the Hadoop system and does not need a mathematical model. This is desirable since 
mathematical models developed for older versions might not carry over to newer 
versions of Hadoop.
\item \textbf{Dimension free nature}: SPSA is designed to handle complex 
search spaces. Thus, unlike \cite{Babu10towardsautomatic} reducing the search 
space is not a requirement.
\item \textbf{Parametric dependencies}: Unlike a host of black-box optimization 
methods that depend on clever heuristics, our SPSA based method computes 
the \emph{gradient} and hence takes 
into account the cross parameter interactions in the underlying problem.
\item \textbf{Performance}: Using the SPSA algorithm we tune $11$ parameters 
simultaneously. 
Our method provides a 66\% decrease 
in execution time of Hadoop jobs on an average, when compared to the default configuration. 
Further, we also observe a reduction of 45\% in execution times, 
when compared to prior \cite{starfish} methods.
% 
% We observe a reduction of  $35$, $27$, $55$ percent in the case 
% of a $4$-node cluster and  $52$, $19$, $71$ percent in the case of a $6$-node 
% cluster for benchmark Hadoop applications namely, \emph{Word co-occurrence}, 
% \emph{Grep} and \emph{Terasort} respectively.
\end{itemize}

\subsection{Organisation of the Paper}
\begin{comment}
In the next section, we present the noisy gradient approach to the problem 
of automatic parameter tuning. Following it, in Section~\ref{relatedwork} we discuss the related work 
and contrast it with our approach. We provide a detailed description of our 
SPSA-based approach in Section \ref{spsa}. A brief description of Hadoop data flow analysis is 
presented in Section \ref{hadoop}. This section also  
illustrates the importance of the parameters we tune in this paper. In 
Section \ref{setup} we discuss the specific details in implementing the SPSA 
algorithm to tune the Hadoop parameters. We describe the experimental setup and 
present the results in Section \ref{exp}. Section \ref{concl} concludes the paper and suggests 
future enhancements.
\end{comment}
In the next section, we describe the Hadoop architecture, its data flow analysis and 
point out the importance and role of some of the configuration parameters.
Following it, in Section~\ref{relatedwork} we discuss the related work 
and contrast it with our approach. We provide a detailed description of our 
SPSA-based approach in Section \ref{autotune}.
In Section \ref{setup} we discuss the specific details in implementing the SPSA 
algorithm to tune the Hadoop parameters.
We describe the experimental setup and present the results in Section \ref{exp}. 
Section \ref{concl} concludes the paper and suggests future enhancements.

\section{Hadoop}
\label{hadoop}
Hadoop is an open source implementation of the MapReduce\cite{ghemawat}, 
which has gained a huge amount of 
popularity in recent years as it can be used over commodity hardware. Hadoop 
 has two main components namely MapReduce and Hadoop Distributed File 
System(HDFS). The HDFS is used for storing data and MapReduce is used for 
performing computations over the data. 
We first discuss the HDFS and then MapReduce. Following this, we describe
the data flow analysis in Hadoop with an aim
to illustrate the importance of the various parameters.

\subsection{Hadoop Distributed File System}
Hadoop uses HDFS to store input and 
output data for the MapReduce applications. HDFS provides interfaces for 
applications to move themselves closer \cite{url_6} to where the data is located 
because data movement will be costly as compared to movement of small MapReduce 
code. It is fault tolerant and is optimized for storing large data sets. 

% A HDFS cluster (see Fig.~\ref{fig:hdfs}) consists of a single NameNode, a
A HDFS cluster (see \cite{hadoop}) consists of a single NameNode, a 
master server, and multiple slave DataNodes. 
The DataNodes, usually one per node, store the actual data 
used for computation. These manage the storage attached to the nodes 
that they run on. 
% They perform block creation, deletion, and replication upon 
% instruction from the NameNode. 
Internally, a file is split into one or more data 
blocks (block size is controlled by \emph{dfs.block.size}) and these blocks 
are stored in a set of DataNodes. They are responsible for serving read and 
write requests from the file system's clients.
NameNode manages the file system namespace and regulates access to 
files by clients. It has the following functions:
\begin{itemize}
 \item Store HDFS metadata and execute file systems operations on HDFS
%  \item Executes file system operations on files and directories stored on HDFS
 \item Mapping data blocks to DataNodes
 \item Periodically monitor the performance of DataNodes
\end{itemize}

\subsection{MapReduce}
A client application submits a MapReduce job. It is then split into various map 
and reduce tasks that are to be executed in the various cluster nodes. 
In MapReduce version 1 (v1), the JobTracker, usually running on a dedicated node, 
is responsible for execution and monitoring of jobs in the cluster.
It schedules map and reduce tasks to be run on the nodes in the cluster, 
which are monitored by a corresponding TaskTracker running on that particular node.
Each TaskTracker sends the progress of the corresponding map or reduce task to JobTracker
at regular intervals.
Hadoop MapReduce version 2 (v2, also known as Yet Another Resource Negotiator (YARN)\cite{yarn}) 
has a different architecture. 
It has a ResourceManager and NodeManager instead of JobTracker and TaskTracker. 
The tasks of resource and job management are distributed among resource manager and 
application master (a process spawned for every job) respectively.
The job submitted by a client application (for e.g., Terasort, WordCount benchmark applications) 
is associated with a NodeManager, which starts an ``Application Master'' in a container 
(a container is a Unix process, which runs on a node). The container architecture
utilizes cluster resources better, since YARN
manages a pool of resources that can be allocated based on need. This is unlike 
MapReduce v1 where each TaskTracker is configured with an inflexible map/reduce slot.
A map slot can only be used to run a map task and same with reduce slots.

\subsection{MapReduce Data Flow Analysis}
Map and Reduce are the two main phases of job processing (see Fig. \ref{hadooponearch}). 
The function of these phases is 
illustrated with the following simple example: 
\begin{example}
The objective is to count the number of times each word appears in a file whose 
content is given by,
\begin{center}
``This is an apple. That is an apple''.
\end{center}
The output of the Map operation is then given by, 
\begin{center}
	$<This,1> <is,1> <an,1><apple,1>$\\$<That,1><is,1><an,1><apple,1>$, 
\end{center}
following which the Reduce operation outputs
\begin{center}
    $<This, 1><That, 1> <is, 2> <an, 2> <apple, 2>$. 
\end{center}
Thus we obtain the count for each of the words. 
\end{example}
Map and Reduce phases can perform complex computations in contrast to the above example.
The efficiency of these computations and phases is controlled by various system parameters.
We describe the parameters our algorithm tunes (see Section \ref{exp}) and show how 
they influence the map and reduce phases of computation.

\begin{figure*}
\centering
\includegraphics[scale=0.22]{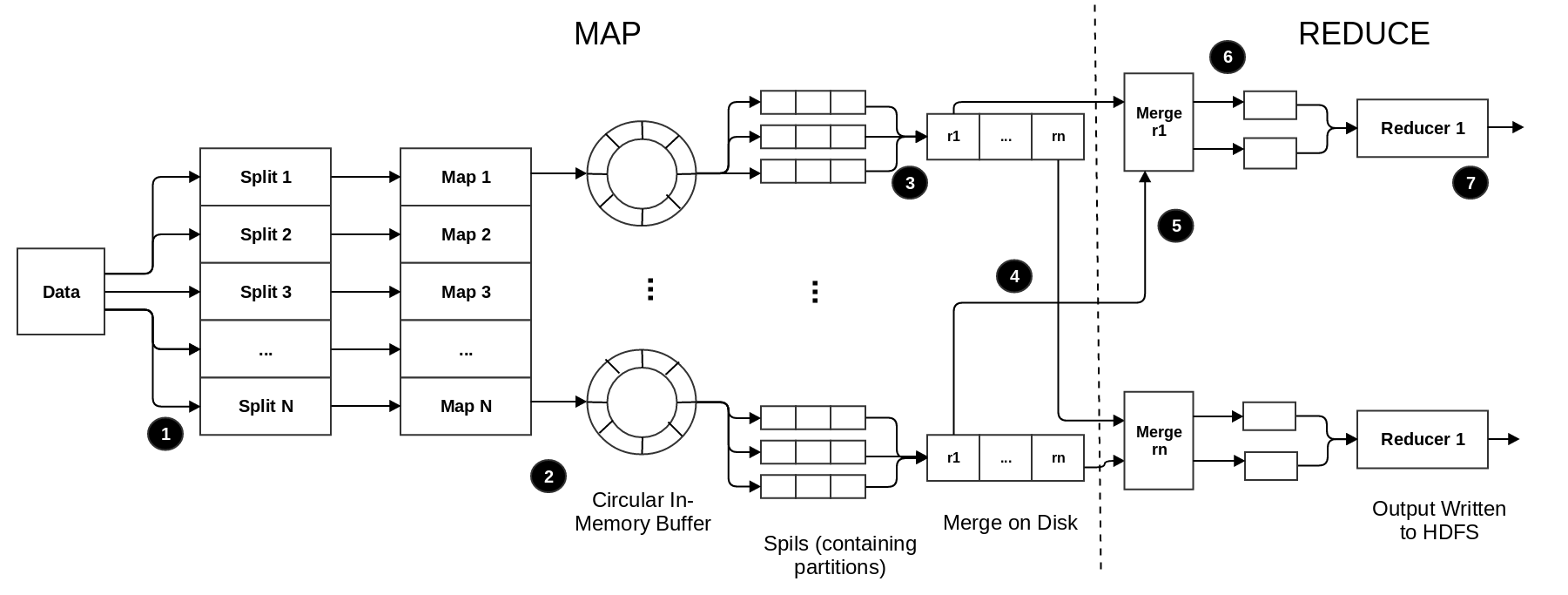}
  \caption{MapReduce working}
  \label{hadooponearch}
\end{figure*}

\subsubsection{Map Phase}
Input data is split according to the input format (text, zip etc.) and split
size is controlled by the parameter \emph{dfs.block.size}.
For each split, a corresponding mapper task is created.

The \emph{Map} function retrieves data (records) from the input split 
with the help of the \emph{record reader}. The record reader provides 
the key-value pair (record) to the mapper according to the input format. 
The Map outputs $<key, value>$ and the meta-data (corresponding partition number of the 
key) according to the logic written in the map function. This output is 
written to a circular buffer, whose size is controlled by parameter 
\emph{mapreduce.task.io.sort.mb}. 
When the data in the buffer reaches a 
threshold, defined by \emph{mapreduce.map.sort.spill.percent}, 
data is \emph{spilled} to the local disk of a mapper node. 
The Map outputs will continue to be 
written to this buffer while the spill takes place. If any time the buffer 
becomes full, the Map task is blocked till spill finishes.

\emph{Sorting} (default \emph{Quick Sort}) and \emph{combine} operations are 
performed in the memory buffer prior to the data spill onto the disk. So 
increasing the buffer size of mapper decreases I/O cost but sorting cost will 
increase. \emph{Combine} executes on a subset of $<key,value>$ pairs. 
Combine is used for reducing data written to the disk. 
% Each spill file has an index information 
% about partition within file (like partition start \& end location). 
% If the index file (contains index information) size exceeds 
% \emph{mapreduce.task.index.cache.limit.bytes}, then the index data (metadata) 
% will be written to the disk with the corresponding spill file.

The \emph{merge} phase starts once the Map and the Spill phases 
complete. In this phase, all the spilled files from a mapper are merged together 
to form a single output file. Number of streams to be merged is controlled by 
the parameter \emph{mapreduce.task.io.sort.factor} (a higher value means more 
number of open file handles and a lower value implies  multiple merging rounds 
in the merge phase). After merging, there could be multiple records with same 
key in the merged file, so combiner could be used again. 
% At 
% this stage, the behavior of the combiner is controlled by 
% \emph{mapreduce.map.combine.minspills}. Mapper's final output could be 
% compressed by setting \emph{mapreduce.map.output.compress} property (setting 
% it to be \emph{true} lessens the network traffic and increases the CPU usage).

\subsubsection{Reduce Phase}
\emph{Reducers} are executed in parallel to the mappers if the 
fraction of map task completed is more than the value of 
\emph{mapreduce.job.reduce.slowstart.completedmaps} parameter, ot\\herwise  
reducers execute after mappers. The number of reducers for a work is controlled 
by \emph{mapreduce.job.reducers}. A Reducer fetches its input partition from 
various mappers via HTTP or HTTPS. The total amount of memory allocated to a 
reducer is set by \emph{mapreduce.reduce.memory.totalbytes} and the fraction 
of memory allocated for storing data fetched from mappers is set by 
\emph{mapreduce.reduce.shuffle.input.buffer.}\\\emph{percent}.
In order to create a single sorted data based on the key, the merge phase is
executed in order to collate the keys obtained from different partitions.
The number of map outputs needed to start this merge process is determined by 
\emph{mapreduce.reduce.merge.inmem.threshold}. The threshold for spilling the merged
map outputs to disk is controlled by
\emph{mapreduce.reduce.shuffle.merge.percent}.
Subsequent to all merge operations, reduce code is executed and its output is saved to the HDFS 
(as shown in Figure \ref{hadooponearch}). 

\subsubsection{Cross-Parameter Interaction}
The parameters \emph{io.sort.mb},\emph{reduce.input.buffer.percent} and 
\emph{shuffle.input.buffer.percent} control the number of spills written to disk.
Increasing the memory allocated will reduce the number of spill records in both Map and Reduce phases.
When \emph{io.sort.mb} is high, the spill percentage of Map (controlled by \emph{sort.spill.percent})
should be set to a high value. In the Reduce phase, the map outputs are merged and spilled to disk
when either the \emph{merge.inmem.threshold} or \emph{shuffle.merg\\e.percent} is reached.
Similarly, the \emph{task.io.sort.factor} determines the 
minimum number of streams to be merged at once, during sorting. 
So, on the reducer side, if there are say 40 mapper outputs and this value is set to 10, 
then there will be 5 rounds of merging (on an average 10 files for merge round).

The above examples indicate that changing one function in the reduce/map phase, affects 
other characteristics of the system. This implies that Hadoop system parameters cannot be tuned in isolation.
Each parameter has to be tuned by taking note of the values of related parameters.
Our SPSA-based method takes into account such relations between the parameters and 
appropriately tunes them to achieve enhanced performance.
In the next section, we discuss the existing work in the literature
which suggest techniques to enhance the performance of Hadoop.

\section{Related work}
\label{relatedwork}
Some early works \cite{Ganapathi:EECS-2009-181,cmureport} have focussed on analysing the MapReduce performance
and not addressed the problem of parameter tuning.
The authors in \cite{Ganapathi:EECS-2009-181} develop models for predicting performance of 
Hive queries and ETL (Extract Transform Load) kind of MapReduce jobs. 
This work uses KCCA (Kernel Canonical Correlation Analysis) and nearest neighbor 
for modeling and prediction respectively. KCCA provides dimensionality reduction 
and preserves the neighborhood relationship even after projecting onto a lower 
dimensional space. It uses a number of training sets to build a single model for 
multiple performance metrics. 
Hadoop job log for a period of six months is used for training. Input data 
characteristic (like byte read locally, byte read from HDFS and byte input to a map stage), 
configuration parameters, job count (number and configuration of map and reduce to be executed 
by a given Hadoop job), query operator count etc. are used as features for comparison and 
prediction about new job.

MapReduce logs of a M45 supercomputing cluster (released by Yahoo!) are analysed in \cite{cmureport}.
This analysis characterizes job patterns, completion times, job failures and resource utilization patterns
based on the logs. Jobs are characterized into map-only, reduce-only, reduce-mostly etc.
Based on this categorization, \cite{cmureport} suggests improvements in Hadoop MapReduce
which can mitigate performance bottlenecks and reduce job failures.
\begin{comment}
% The authors also suggest using regression algorithms for predicting job completion times
% and demonstrate its feasibility using the dataset for training and testing. 
\end{comment}

Attempts toward building an optimizer for hadoop performance started with Starfish\cite{starfish}.
In Starfish \cite{starfish,journals/pvldb/HerodotouB11}, a \emph{Profiler} collects detailed 
statistical information (like data flow and cost statistics) from unmodified Mapreduce program 
during full or partial execution. Then, a \emph{What-if} engine estimates the 
cost of a new job without executing it on real system using mathematical models, simulation, 
data flow and cost statitics. The Cost-based optimizer (CBO) uses the what-if 
engine and recursive random search (RSS) for tuning the parameters for a new Mapreduce job. 

Works following Starfish are \cite{aroma,ppabs}. These methods collect information about
the jobs executed on hadoop, a process known as \emph{profiling}.
Job ``signatures'', i.e., the resource utilization patterns of the jobs are used for profiling.
In the offline phase, using a training set, 
the jobs are clustered (using variants of k-means) according to their respective signatures. 
In the online phase \cite{aroma} trains a SVM which makes accurate and fast prediction of a job's performance
for various configuration parameters and input data sizes. For any new job, its signature is matched with
the profiles of one of the clusters, after which that cluster's optimal parameter
configuration is used. In \cite{ppabs}, the optimal parameter configuration 
for every cluster is obtained through simulated annealing, albeit 
for a reduced parameter search space.

An online MapReduce performance tuner (MROnline) is developed in \cite{mronline}.
It is desgined and implemented on YARN \cite{yarn} (described in Section \ref{hadoop}). 
MROnline consists of a centralized master component which is the online
tuner. It is a daemon process that runs on the same machine as the
resource manager of YARN or on a dedicated machine. Online
tuner controls slave components that run
within the node managers on the slave nodes of the YARN cluster.
It consists of three components: a \emph{monitor}, a
\emph{tuner} and a \emph{dynamic configurator}. The monitor works together
with the per-node slave monitors to periodically monitor application statistics.
These statistics are sent to the centralized monitor. The centralized monitor then 
aggregates, analyzes and passes the information
to the tuner.
The tuner implements hill climbing algorithm to tune parameter values. The tuned parameter values
are distributed to the slave configurators by the dynamic configurator.
The slave configurators activate the new parameter values for the 
tasks that are running on their associated nodes.

Industry and MapReduce vendors also provide guides \cite{microsoft,cloudera} on parameter tuning
which help in finding suitable values for the client applications.
However, these guides are heuristic and the end-users are still faced with the challenge
of manually trying out multiple parameter configurations.

\subsection{Motivation for Our Approach}
The contrast between prior approaches to parameter tuning in Hadoop and our 
approach is shown in Figure \ref{approaches}.
\begin{figure}[h!]
\centering
\begin{tikzpicture}[domain=0:7.7,scale=0.7,font=\small,axis/.style={very thick, ->, >=stealth'}]
\node [right] at (0.5,0.25){\textbf{Prior Art}};
\node [left] at (0,-0.5){Collect Data};
\draw [->](0,-0.5)--(0.5,-0.5);
\node [right] at (0.5,-0.5){Model};
\draw [->](1.95,-0.5)--(2.25,-0.5);
\node [right] at (2.25,-0.5){Simulate};
\draw [->](4.1,-0.5)--(4.5,-0.5);
\node [right] at (4.5,-0.5){Optimize with};
\node [right] at (3.8,-1){reduced parameters};
\node [right] at (1,-1.4) {\textbf{vs.}};
\node [right] at (0,-2.3){\textbf{Our approach}};
\node [left] at (1,-2.9){Observe System};
\draw [->](1,-2.9)--(2.5,-2.9);
\node [right] at (2.5,-2.9){Optimize via Feedback};
\end{tikzpicture}
\caption{Prior art vs. our approach to parameter tuning in Hadoop.}
\label{approaches}
\end{figure}
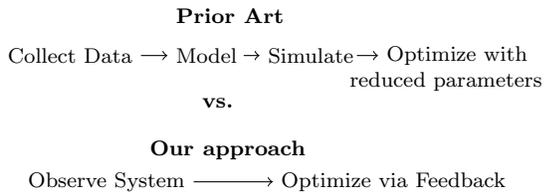 
In \cite{journals/pvldb/HerodotouB11}, the optimization is based on the 
\emph{what-if} engine which uses a mix of simulation and model-based estimation. 
Here, the cost model $F$ is high-dimensional, nonlinear, non-convex and 
multimodal. In \cite{ppabs}, authors make use of available 
knowledge from literature in order to reduce the parameter space and they make 
use of simulated annealing to find the right parameter setting in the reduced 
space. We observe that 
\begin{enumerate}
\item Collecting statistics and building an accurate model requires certain 
level of expertise. Also, mathematical models developed for an older version may 
fail for the newer versions since Hadoop is evolving continuously. In the worst 
case, mathematical models might not be well defined for some components of 
Hadoop. 
\item The effect of cross-parameter interactions are significant and hence it 
might be a good idea to have the search space as big as possible.
\end{enumerate}
With the above two points in mind, we suggest a more direct approach (see 
Figure \ref{approaches}), i.e., we suggest a method that directly utilizes the 
data from the real system and tunes the parameters via \emph{feedback}. Thus, we 
are motivated to adapt SPSA algorithm 
to tune the parameters. We believe that the SPSA based scheme is of interest to 
practitioners because it does not require any model building and it uses only the 
\emph{gradient} estimate at each step. Through the gradient estimate, it takes the cross 
parameter interaction into account. Further, the SPSA algorithm is not limited 
by the parameter dimension and requires only $2$ measurements per iteration.

\section{Automatic Parameter Tuning}
\label{autotune}
The performance of various complex systems such as  traffic control \cite{tsc}, 
unmanned aerial vehicle (UAV) control \cite{uav}, remote sensing \cite{remotesensing}, 
communication in satellites \cite{satellite} and airlines \cite{airlines} 
depends on a set of tunable parameters (denoted by $\theta$). 
Parameter tuning in such cases is difficult because of bottlenecks 
namely the \emph{black-box} nature of the problem and the 
\emph{curse-of-dimensionality} i.e., the complexity of the search space. In this section, we 
discuss the general theme behind the methods that tackle these bottlenecks and 
their relevance to the problem of tuning the Hadoop parameters.
\subsection{Bottlenecks in Parameter Tuning}
% This adds the image to the doc from the file bbopti
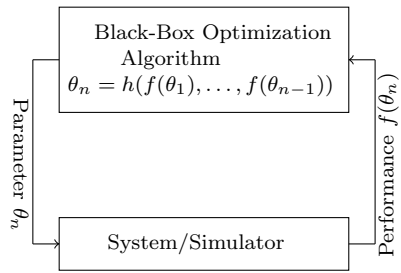
\begin{figure}[h!]
\centering
\begin{tikzpicture}[domain=0:7.7,scale=0.7,font=\small,axis/.style={very thick, ->, >=stealth'}]
\draw (0,0) rectangle (5.5,2);
\draw (0,-3) rectangle (5.5,-2);
\draw[-](0,1)--(-0.5,1);
\draw[-](-0.5,1)--(-0.5,-2.5);
\draw[->](-0.5,-2.5)--(0,-2.5);
\draw[->](6,1)--(5.5,1);
\draw[-](6,1)--(6,-2.5);
\draw[-](6,-2.5)--(5.5,-2.5);
\node [right] at (0.75,-2.5){System/Simulator};
\node [rotate=270] at (-0.75,-1) {Parameter $\theta_n$};
\node [rotate=90] at (6.25,-1) {Performance $f(\theta_n)$};
\node [right] at (1,1) {Algorithm};
\node [right] at (0.5,1.5) {Black-Box Optimization};
\node [right] at (-0.1,0.5) {\hspace{2pt}$\theta_n=h(f(\theta_1),\ldots,f(\theta_{n-1}))$\hspace{2pt}};
\end{tikzpicture}
\caption{The black-box optimization algorithm makes use of the \emph{feedback} received from the system/simulator to tune the parameters. Here $n=1,2,\ldots$ denotes the trial number, $\theta_n$ is the parameter setting at the $n^{th}$ trial and $f(\cdot)$ is the performance measure. The map $h$ makes use of past observations to compute the current parameter setting.}
\label{bbopti}
\end{figure}
In many complex systems, the exact nature of the dependence of the performance 
on the parameters is not known explicitly i.e., the performance cannot be 
expressed as an analytical function of the parameters. As a result, the 
parameter setting that offers the best performance cannot be computed apriori. 
However, the performance of the system can be observed for any given parameter 
setting either from the system or a simulator of the system. In such a scenario, 
one can resort to \emph{black-box}/\emph{simulation-based} optimization methods 
that tune the parameters based on the output observed from the system/simulator 
without knowing its internal functioning. Figure \ref{bbopti} is a schematic to 
illustrate the black-box optimization procedure. Here, the black-box 
optimization scheme sets the current value of the parameter based on the past 
observations. The way in which past observation is used to compute the current 
parameter setting varies across methods.

An important issue in the context of black-box optimization is the number of 
observations and the cost of obtaining an observation from the system/simulator. 
The term \emph{curse-of-dimensionality} denotes the exponential increase in the 
size of the search space as the number of dimensions increases. 
In addition, in many applications, the parameter $\theta$ belongs to a subset 
$X$ of $\R^n$ (for some positive integer $n>0$). Since it is computationally 
expensive to search such a large and complex parameter space, it is important for 
black-box optimization methods to make as fewer observations as possible.

Hadoop MapReduce  exhibits the above described black box kind of behavior because it is not well 
structured like SQL.
In addition, cross-parameter interactions also affect the performance, and hence 
it is not possible to treat the parameters independent of each other. Besides, the 
problem is also afflicted by the curse-of-dimesionality.
% %Thus, expertise is required to find the better-response parameter setting i.e., 
% the one that outperforms the default setting. In his position paper 
% \cite{Babu10towardsautomatic}, the author made a case for automated tuning of 
% parameters. 
% %A variety of heuristic search methods such as simulated annealing \cite{}, 
% recursive random \cite{} search are useful in searching complex spaces. Whilst 
% these methods yield results in practice they do not handle the 
% curse-of-dimensionality and require careful reduction of the search space 
% \cite{}. Further, such methods do not directly take into account the 
% cross-parameter interactions in the underlying problem.
% \vs
\subsection{Noisy Gradient based optimization}
In order to take the cross-parameter interactions into account, one has to make 
use of the sensitivity of the performance measure with respect to each of the 
parameters at a given parameter setting. This sensitivity is formally known as 
the \emph{gradient} of the performance measure at a given setting. It is 
important to note that it takes only $O(n)$ observations to compute the gradient 
of a function at a given point. However, even $O(n)$ computations are not 
desirable if each observation is itself costly.

% In this section, we provide an intuitive 
% explanation of noisy gradient schemes and defer a more detailed discussion to 
% Section \ref{spsa}. 
Consider the noisy gradient scheme given in \eqref{ngrad} 
below.
\begin{align}\label{ngrad}
\theta_{n+1}&=\theta_n-\alpha_n\big(\nabla f_n +M_n\big),
\end{align}
where $n=1,2\ldots$ denotes the iteration number, $\nabla f_n\in \R^n$ is the 
gradient of function $f$, $M_n\in \R^n$ is a zero-mean noise sequence and $\alpha_n$ is the 
step-size. 
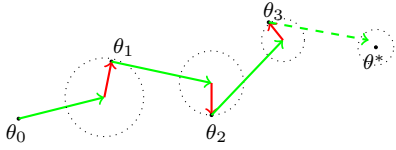
\begin{figure}[h!]
\centering
\begin{tikzpicture}[domain=0:7.7,scale=0.95,font=\small,axis/.style={very thick, ->, >=stealth'}]
\draw [fill=black](5.3,1.2) circle (0.02);
\draw [dotted](5.3,1.2) circle (0.25);
\node[right] at (5,1) {$\theta^*$};
\draw [fill=black](0.3,0.2) circle (0.02);
\node[right] at (0,0) {$\theta_0$};
\draw[green,thick,->](0.3,0.2)--(1.5,0.5);
\draw [fill=black](1.6,1) circle (0.02);
\draw [dotted](1.5,0.5) circle (0.55);
\draw[red,thick,->](1.5,0.5)--(1.6,1);
\node[right] at (1.5,1.2) {$\theta_1$};
\draw[green,thick,->](1.6,1)--(3,0.7);
\draw [dotted](3,0.7) circle (0.45);
\draw[red,thick,->](3,0.7)--(3,0.25);
\draw [fill=black](3,0.25) circle (0.02);
\node[right] at (2.8,0.0) {$\theta_2$};
\draw[green,thick,->](3,0.25)--(4,1.3);
\draw [dotted](4,1.3) circle (0.3);
\draw[red,thick,->](4,1.3)--(3.8,1.55);
\draw [fill=black](3.8,1.55) circle (0.02);
\node[right] at (3.6,1.7) {$\theta_3$};
\draw[green,thick,dashed,->](3.8,1.55)--(5.2,1.3);
\end{tikzpicture}
\caption{Noisy Gradient scheme. Notice that the noise can be filtered by an appropriate choice of diminishing step sizes.}
\label{noisegrad}
\end{figure}
Fig. \ref{noisegrad} presents an intuitive picture of how a noisy gradient algorithm works. 
Here, the algorithm starts at $\theta_0$ and needs to move to $\theta^*$ which is the 
desired solution. The green lines denote the true gradient step (i.e., $\alpha_n \nabla f_n$) 
and the dotted circles show the region of uncertainty due to the noise term $\alpha_n M_n$. 
The red line denotes the fact that the true gradient is disturbed and the iterates are 
\emph{pushed} to a different point within the region of uncertainty. The idea 
here is to use diminishing step-sizes to filter the noise and eventually move 
towards $\theta^*$. The simultaneous perturbation stochastic approximation (SPSA) 
algorithm is a \emph{noisy gradient} algorithm which works as illustrated in Figure \ref{noisegrad}.
It requires only $2$ observations per iteration. 
We adapt it to tune the parameters of Hadoop. By adaptively tuning the Hadoop parameters,
we intend to optimize the Hadoop job execution time, which is the performance metric (i.e., $f(\theta)$) in 
our experiments. Note that we can also have other performance metrics - like number of records spilled to disk,
Memory and heap usage or number of failed jobs. 
Next, we provide a detailed description of SPSA.

\subsection{Simultaneous Perturbation \\Stochastic Approximation (SPSA)}
\label{spsa}
We use the following notation:
\begin{enumerate}
\item $\theta\in X \subset \R^n$ denotes the tunable parameter. Here $n$ is the 
dimension of the parameter space. Also, $X$ is assumed to be a 
compact and convex subset of $\R^n$.
\item Let $x\in \R^n$ be any vector then $x(i)$ denotes its $i^{th}$ 
co-ordinate, i.e., $x=(x(1),\ldots,x(n))$.
\item $f(\theta)$ denotes the performance of the system for parameter $\theta$. 
Let $f$ be a smooth and differentiable function of $\theta$.
\item $\nabla f(\theta)=(\frac{\partial 
f}{\partial\theta(1)},\ldots,\frac{\partial f}{\partial\theta(n)})$ is the 
gradient of the function, and $\frac{\partial f}{\partial \theta(i)}$ is the 
partial derivative of $f$ with respect to $\theta(i)$.
\item $e_i\in \R^n$ is the standard $n$-dimensional unit vector with $1$ in the $i^{th}$ 
co-ordinate and $0$ elsewhere.
\end{enumerate}
Formally the gradient is given by
\begin{align}\label{partial}
\frac{\partial f}{\partial\theta(i)}=\lim_{h\rightarrow 0}\frac{f(\theta+h 
e_i)-f(\theta)}{h}.
\end{align}
In \eqref{partial}, the $i^{th}$ partial derivative is obtained by perturbing 
the $i^{th}$ co-ordinate of the parameter alone and keeping rest of the 
co-ordinates the same. Thus, the number of operations required to compute the 
gradient once via perturbations is of $n+1$. This can be a shortcoming in cases when 
it is costly (i.e., computationally expensive) to obtain measurements of $f$ and 
the number of parameters is large.

The SPSA algorithm \cite{spall} computes the gradient of a function with only 
$2$ or fewer perturbations. Thus the SPSA algorithm is extremely useful in cases 
when the dimensionality is high and the observations are costly. The idea behind 
the SPSA algorithm is to perturb not just one co-ordinate at a time but all the 
co-ordinates together simultaneously in a random fashion. However, one has to 
carefully choose these random perturbations so as to be able to compute the 
gradient. Formally, a random perturbation $\Delta\in \R^n$ should satisfy the 
following assumption.
\begin{assumption}\label{indep}
For any $i\ne j, i=1,\ldots,n, j=1,\ldots,n$, the random variables $\Delta(i)$ 
and $\Delta(j)$ are zero-mean,independent, and the random variable $Z_{ij}$ given by 
$Z_{ij}=\frac{\Delta(i)}{\Delta(j)}$ is such that $\mathbf{E}[Z_{ij}]=0$ and it 
has finite second moment.
\end{assumption}

We now provide an example of random perturbations that satisfies the 
Assumption~\ref{indep}.
\begin{example}
$\Delta\in \R^n$ is such that, each of its co-ordinates $\Delta(i)$s are 
independent Bernoulli random variables taking values $-1$ or $+1$ with equal 
probability, i.e., $Pr\{\Delta(i)=1\}=Pr\{\Delta(i)=-1\}=\frac{1}{2}$ for all 
$i=1,\ldots,n$.
\end{example}
\subsection{Noisy Gradient Recovery from Random Perturbations}
Let $\hat{\nabla} f_\theta$ denote the gradient estimate, and let $\Delta\in 
\R^n$ be any perturbation vector satisfying Assumption~\ref{indep}. Then for any 
small positive constant $\delta >0$, the one-sided SPSA algorithm \cite{chen96,chen97} obtains an estimate of 
the gradient according to equation \eqref{gradest} given below.
\begin{align}\label{gradest}
\hat{\nabla} f_\theta(i)=\frac{f(\theta+\delta 
\Delta)-f(\theta)}{\delta\Delta(i)}.
\end{align}
We now look at the \emph{expected} value of $\hat{\nabla} f_{\theta}(i)$, i.e., 
\begin{align}\label{gradestexp}
\mathbf{E}[\hat{\nabla} f_\theta(i) | \theta]&= \mathbf{E}\left[\frac{f(\theta)+\delta\Delta^\top\nabla f(\theta)+o(\delta^2)-f(\theta)}{\delta \Delta(i)} \;| \theta \right]\nn\\
				    &=\frac{\partial f}{\partial \theta(i)}+\mathbf{E} \left[\sum_{j=1,j\neq i}^n  \frac{\partial f}{\partial \theta(j)}\frac{\Delta(j)}{\Delta(i)} | \theta \right]+o(\delta)\nn\\
				    &=\frac{\partial f}{\partial \theta(i)}+o(\delta).
\end{align}
The third equation follows from the second by noting that 
$\mathbf{E}\left[\frac{\partial f}{\partial \theta(j)}\frac{\Delta(j)}{\Delta(i)} | \theta \right]=0$, 
a fact that follows from the property 
of $\Delta$ in Assumption~\ref{indep}. Thus $\mathbf{E}[\hat{\nabla} 
f_\theta(i)] \ra \nabla f_\theta(i)$ as $\delta\ra 0$.\par
Notice that in order to compute the gradient $\nabla f_\theta$ at the point 
$\theta$ the SPSA algorithm requires only $2$ measurements namely $f(\theta)$ and 
$f(\theta+\delta\Delta)$. 
An extremely useful consequence is that the gradient estimate is not affected by 
the number of dimensions. 

% As a result, the SPSA algorithm has been widley used 
% in a variety of domain where high dimensional parameters have to be tuned and is 
% also an appealing choice to tune the Hadoop parameters.\par
\begin{algorithm}
\caption{Simultaneous Perturbation Stochastic Approximation}
\label{algo}
\begin{algorithmic}[1]
\STATE Let initial parameter setting be $\theta_0\in X\subset\R^n$
\FOR{$n=1,2\ldots,N$}
\STATE Observe the performance of system $f(\theta_n)$.
\STATE Generate a random perturbation vector $\Delta_n \in\R^n$.
\STATE Observe the performance of system $f(\theta_n+\delta\Delta_n)$.
\STATE Compute the gradient estimate $\hat{\nabla} 
f_n(i)=\frac{f(\theta_n+\delta 
\Delta_n)-f(\theta_n)}{\delta\Delta_n(i)}.$\label{gres}
\STATE Update the parameter in the negative gradient direction 
$\theta_{n+1}(i)=\Gamma\big(\theta_n(i)-\alpha_n\frac{
f(\theta_n+\delta\Delta_n)-f(\theta_n)}{\delta\Delta_{n}(i)}\big)$.\label{update}
\ENDFOR
\RETURN $\theta_{N+1}$
\end{algorithmic}
\label{politer}
\end{algorithm}
The complete SPSA algorithm is shown in Algorithm~\ref{algo}, where
$\{\alpha_n\}$ is the step-size schedule and $\Gamma$ 
is a projection operator that keeps the iterates within $X$. We now briefly discuss 
the conditions and nature of the convergence of the SPSA algorithm. 
\begin{figure}[h!]
\centering
\begin{tikzpicture}[domain=0:7.7,scale=0.6,font=\small,axis/.style={very thick, ->, >=stealth'}]
\draw (0,-3) rectangle (10,-2);
\draw[-](-0.5,0)--(0,0);
\draw[-](-0.5,0)--(-0.5,-2.5);
\draw[-](-0.5,-2.5)--(0,-2.5);
\draw[-](0,0.25)--(0,-0.5);

\draw[->](0,0.25)--(3,0.25);
\node [right] at (-0.1,0.5){$\theta_n+\delta\Delta_n$};
\draw (3,0.5)rectangle (5,0);
\node [right] at (3,0.25) {System};
\node [right] at (5.5,0.5) {$f(\theta_n+\delta\Delta_n)$};

\draw[-](5,0.25)--(10.5,0.25);
\draw[-](10.5,0.25)--(10.5,-2.75);
\draw[->](10.5,-2.75)--(10,-2.75);

\draw[->](0,-0.5)--(3,-0.5);
\node [right] at (0,-0.75){$\theta_n$};
\draw (3,-0.75)rectangle (5,-0.25);
\node [right] at (3,-0.5) {System};
\node [right] at (5.5,-0.75) {$f(\theta_n)$};

\draw[-](5,-0.5)--(10.25,-0.5);
\draw[-](10.25,-0.5)--(10.25,-2.25);
\draw[->](10.25,-2.25)--(10,-2.25);

\node [right] at (-0.1,-2.5){$\theta_{n+1}(i)=\Gamma\big(\theta_n(i)-\alpha_n\frac{f(\theta_n+\delta\Delta_n)-f(\theta_n)}{\delta\Delta(i)}\big)$};
\node [rotate=90] at (-0.75,-1) {Parameter $\theta_n$};

\end{tikzpicture}
\label{schspsa}
\caption{SPSA. Gradient is computed by perturbing all the co-ordinates. Notice the \emph{sytem-in-loop} nature of the SPSA based tuning.}
\end{figure}
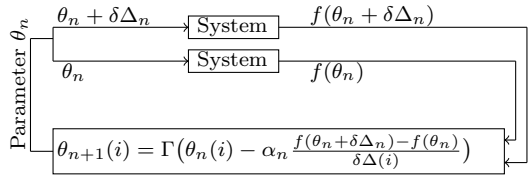 

\subsection{Convergence Analysis}
The SPSA algorithm (Algorithm~\ref{algo}) makes use of a noisy gradient estimate 
(in line~\ref{gres}) and at each iteration takes a step in the negative gradient 
direction so as to minimize the cost function. The noisy gradient update can be 
re-written as 
\begin{align}\label{stocrec}
\theta_{n+1}&= \Gamma \left(\theta_n-\alpha_n\big(\mathbf{E}[\hat{\nabla}f_n | \theta_n] +\hat{\nabla}f_n -\mathbf{E}[\hat{\nabla}f_n | \theta_n ])\right)\\
	    &= \Gamma \left(\theta_n-\alpha_n\big(\nabla f_n + M_{n+1}+\epsilon_n )\right)\nn
\end{align}
where $M_{n+1}=\hat{\nabla}f_n -\mathbf{E}[\hat{\nabla}f_n | \theta_n]$ is an associated martingale 
difference sequence under the sequence of $\sigma$-fields 
$\mathfrak{F}_n = \sigma (\theta_m, m \leq n, \Delta_m, m < n), \; n \geq 1$ 
and $\epsilon_n$ is a small bias due to the $o(\delta)$ term 
in \eqref{gradestexp}.

The iterative update in \eqref{stocrec} is known as a stochastic approximation 
\cite{SA} recursion. As per the theory of stochastic approximation, in order to 
\emph{filter} out the noise, the step-size schedule $\{\alpha_n\}$ needs to 
satisfy the conditions below.
\begin{align}\label{ss}
\sum_{n=0}^\infty \alpha_n =\infty, \sum_{n=0}^\infty \alpha_n^2<\infty.
\end{align}
We now state the convergence result.
\begin{theorem}\label{conv}
As $n\ra \infty$ and $\delta\ra 0$, the iterates in \eqref{stocrec} (i.e., 
line~\ref{update} of Algorithm~\ref{algo}) converge to a set $A=\{\theta|\Gamma (\nabla 
f(\theta) ) =0\}$, where for any continuous 
$J: \R^n \ra \R^n$, $ \hat{\Gamma}(J(x)) = \lim\limits_{\eta \downarrow 0} \left( \frac{\Gamma (x + \eta J(x))- \Gamma (x)}{\eta} \right)$ .
\end{theorem}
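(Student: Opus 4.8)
The plan is to recognize \eqref{stocrec} as a \emph{projected} stochastic approximation recursion and to analyze it via the ODE method of Kushner and Clark (see the stochastic approximation theory in \cite{SA}). First I would verify the standard hypotheses underlying that method: (i) $\nabla f$ is continuous, indeed Lipschitz, on the compact convex set $X$, which follows from the assumed smoothness of $f$ together with compactness of $X$; (ii) the step-sizes satisfy the Robbins--Monro conditions \eqref{ss}; (iii) $\{M_{n+1}\}$ is a martingale difference sequence with respect to $\{\mathfrak{F}_n\}$ with bounded conditional second moments; and (iv) the bias $\epsilon_n$ coming from the $o(\delta)$ term in \eqref{gradestexp} is asymptotically negligible as $\delta\ra 0$. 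Conditions (i) and (ii) are immediate. For (iii) the key observation is that, since $X$ is compact, $f$ and $\nabla f$ are bounded on $X$; combined with the finite second-moment property of $Z_{ij}=\Delta(i)/\Delta(j)$ from Assumption~\ref{indep}, this yields $\mathbf{E}[\|\hat{\nabla}f_n\|^2\mid\mathfrak{F}_n]\le K$ uniformly for some constant $K$, and hence $\mathbf{E}[\|M_{n+1}\|^2\mid\mathfrak{F}_n]\le K$.

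With these hypotheses in hand, I would invoke the Kushner--Clark theorem to conclude that, for each fixed $\delta>0$, the iterates $\{\theta_n\}$ almost surely track, and converge to the internally chain-transitive invariant sets of, the projected ODE
\[
\dot{\theta}(t)=\hat{\Gamma}\!\left(-\nabla f(\theta(t))\right).
\]
The square-summability $\sum_n\alpha_n^2<\infty$ guarantees, via the martingale convergence theorem, that $\sum_n\alpha_n M_{n+1}$ converges almost surely, so the noise is filtered out exactly as depicted in Fig.~\ref{noisegrad}; the projection $\Gamma$ confines the iterates to $X$ and replaces the raw gradient field by its tangent-cone projection $\hat{\Gamma}(\cdot)$ in the limiting dynamics.

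To identify the limit set I would use $f$ itself as a Lyapunov function for the projected ODE. Along any trajectory, $\frac{d}{dt}f(\theta(t))=\langle \nabla f(\theta),\hat{\Gamma}(-\nabla f(\theta))\rangle\le 0$, since $\hat{\Gamma}(-\nabla f(\theta))$ is the projection of $-\nabla f(\theta)$ onto the (closed convex) tangent cone of $X$ at $\theta$, making the inner product equal to $-\|\hat{\Gamma}(-\nabla f(\theta))\|^2$; equality holds precisely on the stationary set $A$ where the projected gradient vanishes. LaSalle's invariance principle then forces every invariant set of the ODE to lie in $A$, so for fixed $\delta$ the iterates converge to an $O(\delta)$ neighbourhood of $A$ induced by the residual bias $\epsilon_n$. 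Finally, letting $\delta\ra 0$ shrinks these neighbourhoods down to $A$, which gives the claimed convergence as $n\ra\infty$ and $\delta\ra 0$.

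The hard part will be the control of the noise and bias peculiar to SPSA, rather than the ODE machinery itself. Bounding $\mathbf{E}[\|M_{n+1}\|^2\mid\mathfrak{F}_n]$ hinges on the division by $\delta\Delta_n(i)$ in line~\ref{gres}, so one must lean on Assumption~\ref{indep} to ensure the finite second moment of $1/\Delta(i)$ (which holds trivially in the Bernoulli example, where $1/\Delta(i)=\Delta(i)=\pm 1$) together with boundedness of $f$ on the compact $X$. Equally delicate is the justification of the iterated limit: one must argue that the descent property of the Lyapunov function, combined with the interchange of $n\ra\infty$ and $\delta\ra 0$, prevents the $o(\delta)$ bias from accumulating, so that the limiting invariant set is exactly $A$.
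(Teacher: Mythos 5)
Your proposal is correct and follows essentially the same route as the paper: the paper's proof is a one-line appeal to the projected stochastic approximation result of Kushner and Clark (Theorem 3.3.1, pp.\ 191--196 of \cite{kushner}), and your argument---verifying the martingale-difference noise bound via Assumption~\ref{indep}, the Robbins--Monro conditions \eqref{ss}, the $o(\delta)$ bias, and then invoking the projected ODE $\dot{\theta}=\hat{\Gamma}(-\nabla f(\theta))$ with $f$ as a Lyapunov function---is precisely the content of that cited theorem spelled out. Your Moreau-decomposition computation showing $\frac{d}{dt}f(\theta(t))=-\|\hat{\Gamma}(-\nabla f(\theta))\|^2$ and your care about the iterated limit $n\ra\infty$, $\delta\ra 0$ are sound and, if anything, more explicit than the paper's proof.
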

\begin{proof}
 The proof is similar to Theorem 3.3.1, pp. 191-196 of \cite{kushner}.
\end{proof}

Theorem~\ref{conv} guarantees the convergence of the iterates to local minima. 
However, in practice local minima corresponding to small valleys are avoided due 
either to the noise inherent to the update or one can periodically inject some 
noise so as to let the algorithm explore further. Also, though the result stated in 
Theorem~\ref{conv} is only asymptotic in nature, in most practical cases 
convergence is observed in a finite number of steps. In the following section, we 
adapt SPSA to the problem of parameter tuning for enhancing the performance of Hadoop.

\section{Applying SPSA to Hadoop Parameter tuning}
\label{setup}
The SPSA algorithm was presented in its general form in Algorithm~\ref{algo}. We 
now discuss the specific details involved in suitably applying the SPSA 
algorithm to the problem of parameter tuning in Hadoop.

\subsection{Mapping the Parameters}
The SPSA algorithm needs each of the parameter components to be real-valued i.e., in 
Algorithm~\ref{algo}, $\theta\in X \subset\R^n$. However, most of the Hadoop 
parameters that are of interest are not $\R^n$-valued. Thus, on the one hand we 
need a set of $\R^n$-valued parameters that the SPSA algorithm can tune and a 
mapping that takes these $\R^n$-valued parameters to the Hadoop parameters. In 
order to make things clear we introduce the following notation:

\begin{enumerate}
\item The Hadoop parameters are denoted by $\thetah$ and the $\R^n$-valued
parameters tuned by SPSA are denoted by $\thetaa$\footnote{Here subscripts $A$ 
and $H$ are abbreviations of the keywords Algorithm and Hadoop respectively}.
\item $S_i$ denotes the set of values that the $i^{th}$ Hadoop parameter can 
assume. $\thetahn(i)$, $\thetahx(i)$ and $\thetahd(i)$ denote the 
\emph{minimum}, \emph{maximum} and \emph{default} values that the $i^{th}$ 
Hadoop parameter can assume.
\item $\thetaa \in X\subset\R^n$ and $\thetah \in S_1\times\ldots \times S_n$. 
\item $\thetah=\mu(\thetaa)$, where $\mu$ is the function that maps $\thetaa\in 
X\subset \R^n$ to $\thetah \in S_1\times\ldots \times S_n$. 
\end{enumerate}

In this paper, we choose $X=[0,1]^n$, and $\mu$ is defined as 
$\mu(\thetaa)(i)=\lfloor(\thetahx(i)-\thetahn(i))\thetaa(i)+\thetahn(i)\rfloor$ 
and 
$\mu(\thetaa)(i)=(\thetahx(i)-\thetahn(i))\thetaa(i)+\thetahn(i)$ for integer 
and real-valued Hadoop parameters respectively.

\subsection{Perturbation Sequences and Step-Sizes}
We chose $\delta\Delta_n \in \R^n$ be independent random variables, such that 
$Pr\{\delta\Delta_n(i)=-\frac{1}{\thetahx(i)-\thetahn(i)}\} 
=Pr\{\delta\Delta_n(i)=+\frac{1}{\thetahx(i)-\thetahn(i)}\}=\frac{1}{2}$.
This perturbation sequence ensures that the Hadoop parameters assuming only 
integer values change by a magnitude of at least $1$ in every perturbation. 
Otherwise, using a perturbation whose magnitude is less than 
$\frac{1}{\thetahx(i)-\thetahn(i)}$ might not cause any change to the 
corresponding Hadoop parameter resulting in an incorrect gradient estimate.

% \subsection{Choice of Step-Sizes}
The conditions for the step-sizes in \eqref{ss} are asymptotic in nature and are 
required to hold in order to be able to arrive at the result in 
Theorem~\ref{conv}. However, in practice, a constant step size can be used since 
one reaches closer to the desired value in a finite number of iterations. 
%In such cases, the step-sizes are to be chosen by useful rules of thumb. 
We know apriori that the parameters tuned by the SPSA algorithm belong to the 
interval $[0,1]$ and it is enough to have step-sizes of the order of 
$\min_i(\frac{1}{\thetahx(i)-\thetahn(i)})$ (since any finer step-size used to 
update the SPSA parameter $\thetaa(i)$ will not cause a change in the 
corresponding Hadoop parameter $\thetah(i)$). In our experiments, we chose 
$\alpha_n=0.01, \forall n\geq 0$ and observed convergence in about $20$ 
iterations.

\section{Experimental Evaluation}
\label{exp}
We use Hadoop versions 1.0.3 and 2.7 in our experiments. The SPSA
algorithm described in Sections \ref{autotune},\ref{setup} is implemented
as a process which executes in the Resource Manager (and/or NameNode).
First we justify the selection of parameters to be tuned in our experiments, following which
we give details about the implementation.

\subsection{Parameter Selection}

As discussed in Section \ref{hadoop}, based on the data flow analysis of Hadoop MapReduce
and the Hadoop manual \cite{url_4}, we identify 11 parameters which are found to critically
affect the operation of HDFS and the Map/Reduce operations. 
The list of important parameters that emerged by analyzing the MapReduce implementation are 
listed in Table \ref{tab:opt_parameter}.
Numerous parameters of Hadoop deal with book keeping related tasks, whereas some other parameters
are connected with the performance of underlying operating system tasks.
For e.g., \textit{mapred.child.java.opts} is a parameter related to the Java virtual machine (JVM) of 
Hadoop. We avoid tuning such parameters, which are best left for low-level OS optimization.
Instead we tune parameters which are directly Hadoop dependent, for e.g., number of reducers, I/O utilization
parameter etc. 
However, even with $11$ parameters, the search space is still large and complex. 
To see this, if each parameter can assume say $10$ different values then the search space 
contains $10^{11}$ possible parameter settings. 
Some parameters have been left out because 
either they are  backward incompatible or they incur additional overhead in implementation. 

\subsection{Cluster Setup}
Our Hadoop cluster consists of $25$ nodes. Each node has a 8 core Intel Xeon E3, 2.50 GHz 
processor, $3.5$ TB HDD, $16$ GB memory, 1 MB L2 Cache, 8MB L3 Cache. 
One node works as the 
NameNode and the rest of the nodes are used as DataNodes.
For optimization and evaluation purpose we set the number of map slots to $3$ and 
reduce slots to $2$ per node. Hence, in a single wave of Map jobs processing, the cluster can process 
$24 \times 3 = 72$ map tasks  and $24 \times 2 = 48$ reduce tasks (for more details see \cite{hadoop}). 
HDFS block replication was set to $2$. We use a dedicated Hadoop cluster in our experiments, which
is not shared with any other application.

\subsection{Benchmark Applications}
In order to evaluate the performance of tuning algorithm, we adapt representative MapReduce applications.
The applications we use are listed in Table \ref{tab:opt_parameter}.
Terasort application takes as input a text data file and sorts it. It has three components - 
\emph{TeraGen} - which generates the input data for sorting algorithm, \emph{TeraSort} - algorithm that 
implements sorting and \emph{TeraValidate} - validates the sorted output data. 
The Grep application searches for a particular pattern in a given input file.
The Word Cooccurrence application counts the number of occurrences of a particular word
in an input file (can be any text format).
Bigram application counts all unique sets of two consecutive words in a set of documents, 
while the Inverted index application generates word to document indexing from a list of documents. 
Word Co-occurrence is a popular Natural Language Processing program which
computes the word co-occurrence matrix of a large text collection.
As can be inferred, Grep and Bigram applications are CPU intensive, while
the Inverted Index and TeraSort applications are both CPU and memory intensive.
These benchmark applications can be further categorized as map-intensive, reduce-intensive etc.

\subsection{SPSA Iterations}
SPSA is an iterative algorithm and it runs a Hadoop job with different configurations. 
We refer to these iterations as the \emph{optimization} or the \emph{learning} phase.
The algorithm eventually converges to an optimal value of the configuration parameters. 
The performance metric (the job execution time) corresponding to the converged parameter vector is optimal 
for the corresponding application.
During our evaluations we have seen that SPSA 
converges within 20 - 30 iterations and within each iteration 
it makes two observations, i.e. it executes Hadoop job twice taking the total count of 
Hadoop runs during the optimization phase to 40 - 60. It is of utmost importance 
that the optimization phase is fast otherwise it can overshadow the benefits which 
it provides.

In order to ensure that the optimization phase is fast, we execute the Hadoop jobs 
on a subset of the workload. This is done, since SPSA takes considerable time when
executed on large workloads.
Deciding the size of this ``partial workload'' is very important
as the run time on a small work load will be eclipsed by the job setup and cleanup 
time.  We then consider the technique involved in processing done by Hadoop system 
to find a suitable workload size. Hadoop splits the input data based on the 
block size of HDFS. It then spawns a map for each of the splits and processes each of 
the maps in parallel. The number of the map tasks that can run at a given time is 
upper bounded by the total map slots available in the cluster. Using this fact, the 
size of the partial data set which we use is equal to twice the number of 
map slots in the cluster multiplied by the data block size. Hadoop finishes the 
map task in two waves of the maps jobs which allows the SPSA to capture the 
statistics with a single wave and the cross relations between two successive waves.

Our claim is that the value of configuration parameters which optimize these two 
waves of Hadoop job also optimize all the subsequent waves as those are repetitions of 
similar map jobs. However, the number of reducers to run is completely optimized by SPSA, albeit
for a partial workload size. For the larger (actual) workload, the number of 
reducers decided is based on the ratio of partial work load size to the actual size of workload.
An advantage with SPSA iterations is that these can be halted at any parameter configuration
(for e.g., need for executing a production job on the cluster) and later resumed at the same parameter
configuration where the iterations were halted.

\subsection{Optimization Settings}
For evaluating performance of SPSA on different benchmarks, two waves of map tasks during job execution were ensured.
Further, we selected workloads such that the execution time with default configuration is at least $10$ minutes.
This was done to avoid the scenario where the job setup and cleanup time overshadows 
the actual running time and there is practically nothing for SPSA to optimize.

In the cases of Bigram and Inverted Index benchmark executions, we observed that even with small 
amount of data, the job completion time is high (since they are reduce-intensive operations).
So, due to this reason, we used small sized input data files. Using small sized input data files
resulted in the absence of two waves of map tasks. However, since in these applications,
reduce operations take precedence, the absence of two waves of map tasks did not create
much of a hurdle.

We optimize Terasort using a partial data set of size \textit{30GB}, Grep on \textit{22GB},
Word Co-occurrence on \textit{85GB}, Inverted Index on \textit{1GB} and Bigram count on \textit{200MB}
of data set. In optimization (or learning) phases, for each benchmark, we use the default configuration as the initial point
for the optimization. Table \ref{tab:opt_parameter} indicates
the default values and the values provided by SPSA for the parameters we tune.
For greater sizes of data, we used Wikipedia dataset\cite{puma}($\approx100$GB) for Word co-occurrence, Grep and
Bigram benchmarks

The SPSA algorithm is terminated when either the change in gradient estimate is negligible or 
the maximum number of iterations have been reached. An important point to note is that Hadoop parameters can take
values only in a fixed range. We take care of this by projecting the tuned parameter values 
into the range set (component-wise). A penalty-function can also be used instead. 
If noise level in the function evaluation is high then, 
 it is useful to average several SP gradient estimates (each with 
independent values of $\Delta_{k}$) at a given  $\theta_{k}$.  Theoretical 
justification for net improvements to efficiency by such gradient averaging is 
given in ~\cite{Spall92multivariatestochastic}. We can also use a one evaluation variant of SPSA,
which can reduce the per iteration cost of SPSA. 
But it has been shown  that standard two function  measurement form, 
which we use in our work is more efficient (in terms of total number of loss 
function measurements) to obtain a given level of accuracy in the $\theta$ 
iterate. 

\subsection{Comparison with Related Work}
We compare our method with prior works in the literature on Hadoop performance tuning.
Specifically, we look at Starfish\cite{starfish} as well as Profiling and Performance Analysis-based System 
(PPABS) \cite{ppabs} frameworks.
We briefly describe these methods in Section \ref{relatedwork}. Starfish is designed for Hadoop version 1
only, whereas PPABS works with the recent versions also. Hence, in our experiments we use both versions
of Hadoop. To run Starfish, we use the executable hosted by the authors of \cite{starfish} to profile
the jobs run on partial workloads. Then execution time of new jobs is obtained by running the jobs 
using parameters provided by Starfish. For testing PPABS, we collect datasets as described in \cite{ppabs},
cluster them and find optimized parameters (using simulated annealing) for each cluster.
Each new job is then assigned to one cluster and executed with the parameters optimized for that cluster.

\subsection{Discussion of Results}
\begin{table*}[t!]
\small
\centering
{\renewcommand{\arraystretch}{1.5}%
%   \begin{tabular}{|l|l|l|l|l|l|l|l|l|l|l|l|l|l|}
    \begin{tabular}{|l|c|c|c|c|c|c|c|c|c|c|c|c}
    \hline
\multirow{2}{*}{Parameter Name} & 
\multirow{2}{*}{Default} &
\multicolumn{2}{c|}{Terasort} &
\multicolumn{2}{c|}{Grep} &
\multicolumn{2}{c|}{Bigram}  &
\multicolumn{2}{c|}{Inverted Index} &
\multicolumn{2}{c|}{Word Co-occurrence}\\
\cline{3-12}

%       & &  v1.0.3  &  v2.6.3& v1.0.3  &  v2.6.3& v1.0.3  &  v2.6.3& v1.0.3  &  v2.6.3& v1.0.3  &  v2.6.3\\
& &  v1.0.3  &  v2.6.3& v1.0.3  &  v2.6.3&  v1.0.3  &  v2.6.3& v1.0.3  &  v2.6.3 & v1.0.3  &  v2.6.3\\
    \hline
io.sort.mb                      & 100	& 149 		& 524	& 50		&	&\textbf{751}	& 779	&1609	& 202	&221 	& 912   \\
io.sort.spill.percent           & 0.08 	& 0.14		& 0.89	& \textbf{0.83}	&	& 0.53		& 0.53	&0.2	& 0.68	&0.75 	& 0.47  \\
io.sort.factor                  & 10 	& \textbf{475}	& 115	& 5		&	& 5		& 178	&50	& 85	&40 	& 5	\\
shuffle.input.buffer.percent 	& 0.7   & 0.86		& 0.87	& 0.67		&	& 0.43		& 0.43	&0.83	& 0.58	&0.65 	& 0.37  \\
shuffle.merge.percent       	& 0.66 	& 0.14		& 0.83	& 0.63		&	& 0.89		& 0.39	&0.83	& 0.54	&0.71 	& 0.33	\\
inmem.merge.threshold           & 1000  & \textbf{9513}	& 318	& \textbf{681}	&	& 4201		& 200	&1095	& 948	&1466 	& 200	\\
reduce.input.buffer.percent 	& 0.0  	& 0.14		& 0.19	& 0.13		&	& 0.31		& 0.0	&0.17	& 0.07	&0.12 	& 0.0 	\\
mapred.reduce.tasks             & 1 	& \textbf{95}	& 22	& 1		&	&\textbf{33}	& 35	&76	& 16	&14 	& 41	\\
io.sort.record.percent 		& 0.05 	& 0.14		& -	& 0.1		&	& 0.31		& -	 &0.17 	& -	&0.2 	& -	\\
mapred.compress.map.output	& false & true		& -	& false		&	& false		& -	&true 	& -	&false 	& -	\\
mapred.output.compress		& false & false		& -	& false		&	& false		& -	&false	& -	&false 	& -	\\
reduce.slowstart.completedmaps	& 0.05	& - 		& 0.23  & -		&	& -		& 0.05	& -	& 0.18	& -	& 0.4	\\
mapreduce.job.jvm.numtasks	& 1	& -		& 2	& -		&	& -		& 18	& -	& 5	& -	& 21	\\
mapreduce.job.maps		& 2	& -		& 23	& -		&	& -		& 35	& -	& 17	& -	& 2	\\
\hline
 \end{tabular}}
%    \avs
  \caption{ Default value of parameters and their values tuned by SPSA (the last three parameters are defined for Hadoop v2)}
%   \vspace{1ex}
%   \raggedright{This is is the description part for the table.}
\label{tab:opt_parameter}
\end{table*}

\begin{figure*}[!htb]
\begin{subfigure}[b]{.35\textwidth}
\centering
\begin{tikzpicture}[scale=0.65]
  \begin{axis}[ ylabel= {\large Execution time (seconds)},xlabel= {\large No. of iterations}]
  \addplot table[x expr=\coordindex+1,y index=0] {terasort.csv};  
  \end{axis}
\end{tikzpicture}
\caption{Convergence in Terasort}
\end{subfigure}%
\begin{subfigure}[b]{.35\textwidth}
\centering
\begin{tikzpicture}[scale=0.65]
  \begin{axis}[ ylabel= {\large Execution time (seconds)},xlabel= {\large No. of iterations}]
  \addplot table[x expr=\coordindex+1,y index=0] {grep.csv};
  \end{axis}
\end{tikzpicture}
\caption{Convergence in Grep}
\end{subfigure}%
\begin{subfigure}[b]{.35\textwidth}
\centering
\begin{tikzpicture}[scale=0.65]
  \begin{axis}[ ylabel= {\large Execution time (seconds)},xlabel= {\large No. of iterations}]
  \addplot table[x expr=\coordindex+1,y index=0] {wordco.csv};
  \end{axis}
\end{tikzpicture}
\caption{Convergence in Word-Cooccurrence}
\end{subfigure}%

\vspace{0.5cm}

\begin{subfigure}[b]{.5\textwidth}
\centering
\begin{tikzpicture}[scale=0.65]
  \begin{axis}[ ylabel={\large Execution time (seconds)},xlabel= {\large No. of iterations}]
  \addplot table[x expr=\coordindex+1,y index=0] {ii.csv};
  \end{axis}
\end{tikzpicture}
\caption{Convergence in Inverted Index}
\end{subfigure}%
\begin{subfigure}[b]{.5\textwidth}
\centering
\begin{tikzpicture}[scale=0.65]
  \begin{axis}[ ylabel= {\large Execution time (seconds)},xlabel= {\large No. of iterations}]
  \addplot table[x expr=\coordindex+1,y index=0] {bigram.csv};
  \end{axis}
\end{tikzpicture}
\caption{Convergence in Bigram}
\end{subfigure}%
% \begin{subfigure}[b]{.35\textwidth}
% \centering
% \begin{tikzpicture}[scale=0.65]
%   \begin{axis}[ ylabel= Execution time (seconds),xlabel= No. of iterations]
%   \addplot table[x expr=\coordindex+1,y index=0] {wordcount.csv};
%   \end{axis}
% \end{tikzpicture}
% \caption{Convergence in Word Count}
% \end{subfigure}%  
  \caption{Performance of SPSA for different benchmarks on Hadoop v1}
  \label{fig:spsaConv}
\end{figure*}
\begin{figure*}[!htb]
\begin{subfigure}[b]{.35\textwidth}
\centering
\begin{tikzpicture}[scale=0.65]
  \begin{axis}[ ylabel= {\large Execution time (seconds)},xlabel= {\large No. of iterations}]%, legend pos=north east]
    \addplot table[x expr=\coordindex+1,y index=0] {terasort-v2.csv};
%   \addlegendentry{Hadoop v1}
%   \addlegendentry{Hadoop v2}
  \end{axis}
\end{tikzpicture}
\caption{Convergence in Terasort}
\end{subfigure}%
\begin{subfigure}[b]{.35\textwidth}
\centering
\begin{tikzpicture}[scale=0.650]
  \begin{axis}[ ylabel= {\large Execution time (seconds)},xlabel= {\large No. of iterations}]
  \addplot table[x expr=\coordindex+1,y index=0] {grep.csv};
  \end{axis}
\end{tikzpicture}
\caption{Convergence in Grep}
\end{subfigure}%
\begin{subfigure}[b]{.35\textwidth}
\centering
\begin{tikzpicture}[scale=0.650]
  \begin{axis}[ ylabel= {\large Execution time (seconds)},xlabel= {\large No. of iterations}]
  \addplot table[x expr=\coordindex+1,y index=0] {wordco-v2.csv};
  \end{axis}
\end{tikzpicture}
\caption{Convergence in Word-Cooccurrence}
\end{subfigure}%

\vspace{0.5cm}

\begin{subfigure}[b]{.5\textwidth}
\centering
\begin{tikzpicture}[scale=0.650]
  \begin{axis}[ ylabel= {\large Execution time (seconds)},xlabel= {\large No. of iterations}]
  \addplot table[x expr=\coordindex+1,y index=0] {ii-v2.csv};
  \end{axis}
\end{tikzpicture}
\caption{Convergence in Inverted Index}
\end{subfigure}%
\begin{subfigure}[b]{.5\textwidth}
\centering
\begin{tikzpicture}[scale=0.650]
  \begin{axis}[ ylabel= {\large Execution time (seconds)},xlabel= {\large No. of iterations}]
  \addplot table[x expr=\coordindex+1,y index=0] {bigram-v2.csv};
  \end{axis}
\end{tikzpicture}
\caption{Convergence in Bigram}
\end{subfigure}%
  \caption{Performance of SPSA for different benchmarks on Hadoop v2}
  \label{fig:spsaConv2}
\end{figure*}
\begin{figure*}

\begin{subfigure}[b]{.33\textwidth}
\centering
\begin{tikzpicture}[scale=0.55]
\begin{axis}[
    ybar,
    enlargelimits=0.15,
    legend style={at={(0.5,-0.15)},
      anchor=north,legend columns=-1},
    ylabel={{\Large Execution Time (seconds)}},
    symbolic x coords={30GB,50GB,100GB},
    xtick=data,
%     nodes near coords,
%     nodes near coords align={vertical},
    ]
% \addplot coordinates {(30GB,2709) (50GB,9) (100GB,4)};
\addplot[black,pattern=north east lines] coordinates {(30GB,3140) (50GB,3556) (100GB,7223)};

% \addplot coordinates {(30GB,2761) (50GB,4) (100GB,4)};
\addplot[black] coordinates {(30GB,2821.5) (50GB,3956) (100GB,7200)};

% \addplot coordinates {(30GB,1130) (50GB,1) (100GB,1)};
\addplot[black,pattern=horizontal lines] coordinates {(30GB,115.5) (50GB,426) (100GB,860)};

\legend{Default,Starfish,SPSA}
\end{axis}
\end{tikzpicture}

% Reduction in time compared to default
% 30gb - 96%

% 100gb - 88%

% Reduction in time compared to SF
% 30gb - 92%

% 100gb - 82%%\includegraphics[width=.8\textwidth]{}
\caption{Terasort}
\end{subfigure}%% 
\begin{subfigure}[b]{0.33\textwidth}
\centering
\begin{tikzpicture}[scale=0.55]
\begin{axis}[
    ybar,
    enlargelimits=0.15,
    legend style={at={(0.5,-0.15)},
      anchor=north,legend columns=-1},
    ylabel={{\Large Execution Time (in seconds)}},
    symbolic x coords={200MB,2GB,10GB},
    xtick=data,
%     nodes near coords,
%     nodes near coords align={vertical},
    ]
% \addplot coordinates {(30GB,2709) (50GB,9) (100GB,4)};
%% Default config setting results
\addplot[black,pattern=north east lines] coordinates {(200MB,518) (2GB,2460) (10GB,14220)};

% \addplot coordinates {(30GB,2761) (50GB,4) (100GB,4)};
%% Starfish (opt) config setting results
\addplot[black] coordinates {(200MB,306.5) (2GB,865) (10GB,2582)};

% \addplot coordinates {(30GB,1130) (50GB,1) (100GB,1)};
%% SPSA config setting results
\addplot[black,pattern=horizontal lines] coordinates {(200MB,240) (2GB,746) (10GB,1744)};

\legend{Default,Starfish,SPSA}
\end{axis}
\end{tikzpicture}

% Reduction in time compared to default
% 200mb - 54%

% 10gb - 88%

% Reduction in time when compared to starfish
% 200mb - 22%
% 10gb - 33%
\caption{Bigram}
\end{subfigure}%
\begin{subfigure}[b]{.33\textwidth}
\centering
\begin{tikzpicture}[scale=0.55]
\begin{axis}[
    ybar,
    enlargelimits=0.15,
    legend style={at={(0.5,-0.15)},
      anchor=north,legend columns=-1},
    ylabel={{\Large Execution Time}},
    symbolic x coords={1GB,4GB,8GB},
    xtick=data,
%     nodes near coords,
%     nodes near coords align={vertical},
    ]
% \addplot coordinates {(30GB,2709) (50GB,9) (100GB,4)};
\addplot[black,pattern=north east lines] coordinates {(1GB,536) (4GB,1712) (8GB,11700)};

% \addplot coordinates {(30GB,2761) (50GB,4) (100GB,4)};
\addplot[black] coordinates {(1GB,111.5) (4GB,868) (8GB,1765)};

% \addplot coordinates {(30GB,1130) (50GB,1) (100GB,1)};
\addplot[black,pattern=horizontal lines] coordinates {(1GB,100) (4GB,475) (8GB,960)};
\legend{Default,Starfish,SPSA}
\end{axis}
\end{tikzpicture}

%  inverted index: 4GB
% Default : 28mins, 32sec
% SPSA: 7mins, 55sec
% Starfish: 14mins, 28sec

% Reduction in time compared to default
% 1gb - 81%
% 4gb - 72%
% 8gb - 

% Reduction in time when compared to starfish
% 1gb - 10%
% 4gb - 45%
\caption{Inverted Index}
\end{subfigure}%

\vspace{0.5cm}
\begin{subfigure}[b]{0.5\textwidth}
\centering
\begin{tikzpicture}[scale=0.55]
\begin{axis}[
    ybar,
    enlargelimits=0.15,
    legend style={at={(0.5,-0.15)},
      anchor=north,legend columns=-1},
    ylabel={{\Large Execution Time}},
    symbolic x coords={50GB,65GB,100GB},
    xtick=data,
%     nodes near coords,
%     nodes near coords align={vertical},
    ]
% \addplot coordinates {(30GB,2709) (50GB,9) (100GB,4)};
\addplot[black,pattern=north east lines] coordinates { (50GB,124) (65GB,260) (100GB,207)};

% \addplot coordinates {(30GB,2761) (50GB,4) (100GB,4)};
\addplot[black] coordinates {(50GB,137)(65GB,200) (100GB,250)};

% \addplot coordinates {(30GB,1130) (50GB,1) (100GB,1)};
\addplot[black,pattern=horizontal lines] coordinates {(50GB,110) (65GB,162) (100GB,193)};
\legend{Default,Starfish,SPSA}
\end{axis}
\end{tikzpicture}

% Reduction in time compared to default
% 65gb - 5%
% 4gb - 72%
% 8gb - 
\caption{Grep}
\end{subfigure}%
\begin{subfigure}[b]{.5\textwidth}
\centering
\begin{tikzpicture}[scale=0.55]
\begin{axis}[
    ybar,
    enlargelimits=0.15,
    legend style={at={(0.0,-0.15)},
      anchor=north,legend columns=-1},
    ylabel={{\Large Execution Time}},
    symbolic x coords={1GB,2GB,4GB},
    xtick=data,
%     nodes near coords,
%     nodes near coords align={vertical},
    ]
% \addplot coordinates {(30GB,2709) (50GB,9) (100GB,4)};
\addplot[black,pattern=north east lines] coordinates {(1GB,1249) (2GB,7758) (4GB,15660) };

% \addplot coordinates {(30GB,2761) (50GB,4) (100GB,4)};
\addplot[black] coordinates {(1GB,1002) (2GB,3263) (4GB,8023)};

% \addplot coordinates {(30GB,1130) (50GB,1) (100GB,1)};
\addplot[black,pattern=horizontal lines] coordinates {(1GB,985) (2GB,3002) (4GB,3732) };
\legend{Default,Starfish,SPSA}
\end{axis}
\end{tikzpicture}

% Reduction in time compared to default
% 1gb - 21%
% 2gb - 74%
% 4gb - 76%

% Reduction in time compared to SF
% 1gb - 16%
% 2gb - 66%
% 4gb - 54%
\caption{Word Co-occurrence}
\end{subfigure}%

  \caption{Performance comparison of SPSA, Starfish and Default settings for benchmark applications (MapReduce v1)}
  \label{fig:perf}
\end{figure*}
\begin{figure*}
\centering
% 
% \begin{subfigure}[b]{.33\textwidth}
% \centering
% \input{terasortRes}%\includegraphics[width=.8\textwidth]{imagena.jpg}
% \caption{Terasort}
% \end{subfigure}%% 
% \begin{subfigure}[b]{0.33\textwidth}
% \centering
% \input{bigramRes}
% \caption{Bigram}
% \end{subfigure}%
% \begin{subfigure}[b]{.33\textwidth}
% \centering
% \input{iiRes}
% \caption{Inverted Index}
% \end{subfigure}%
% 
% \vspace{0.5cm}
% \begin{subfigure}[b]{0.5\textwidth}
% \centering
% \input{grepRes}
% \caption{Grep}
% \end{subfigure}%
% \begin{subfigure}[b]{.5\textwidth}
% \centering
% \input{woocRes}
% \caption{Word Co-occurrence}
% \end{subfigure}%
% 
%   \caption{Performance comparison of SPSA, PPABS and Default settings for benchmark applications (MapReduce v2)}
%   \label{fig:perf2}
% \end{figure*}
% 
% ,					Default,	SPSA,	PPBAS,	
% Terasort,				3562,		496,	1000----,		50 GB 
% Bigram,				2666,		631,	----,		2 GB wiki
% Grep,					124,		123,	----,		50 GB wiki
% Inverted Index,			2024,		442,	----,		4 GB wiki
% Word Co occurence,			8333,		2659,	----,		2 GB wiki
% 

\begin{tikzpicture}[scale=0.55]
\begin{axis}[
    ybar,x=5cm,
  legend style={font=\Large,at={(0.5,-0.25)},
      anchor=north,legend columns=-1},
    ylabel={{\Large Execution Time (seconds)}},
    symbolic x coords={Terasort ($50$GB),Inverted Index ($4$GB), Bigram ($2$GB),Word Co-occurrence ($2$GB) },
    xtick=data,tick label style={font=\Large},
    x label style={at={(axis description cs:0.5,-0.1)},anchor=north},
    y label style={at={(axis description cs:-0.05,.5)},anchor=south},
%     nodes near coords,
%     nodes near coords align={vertical},
    ]
% \addplot coordinates {(30GB,2709) (50GB,9) (100GB,4)};
\addplot[black,pattern=north east lines] coordinates {(Terasort ($50$GB),3562) (Inverted Index ($4$GB),2024) (Bigram ($2$GB),2666) (Word Co-occurrence ($2$GB),8333)};

% \addplot coordinates {(30GB,2761) (50GB,4) (100GB,4)};
\addplot[black] coordinates {(Terasort ($50$GB),496) (Inverted Index ($4$GB),442) (Bigram ($2$GB),631) (Word Co-occurrence ($2$GB),2659) };

\addplot[black,pattern=horizontal lines] coordinates {(Terasort ($50$GB),1578) (Inverted Index ($4$GB),683) (Bigram ($2$GB),1127) (Word Co-occurrence ($2$GB),3427)};

\legend{Default,SPSA,PPABS}
\end{axis}
\end{tikzpicture}
  \caption{Performance comparison of Default settings, SPSA and PPABS for benchmark applications (Hadoop v2)}
  \label{fig:perf2}
\end{figure*}
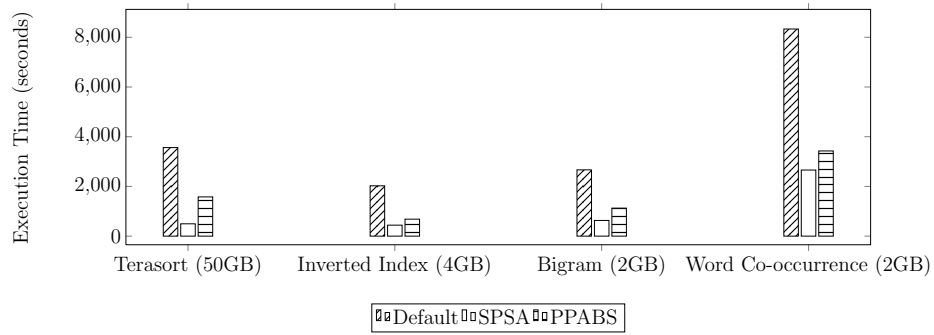

Our method starts optimizing with the default configuration,
hence the first entry in Fig. \ref{fig:spsaConv} show the execution time
of Hadoop jobs for the default parameter setting. 
It is important to note that the jumps in the plots are due to the
noisy nature of the gradient estimate and they eventually
die down after sufficiently large number of iterations.
As can be observed from Fig. \ref{fig:spsaConv}, SPSA reduces the 
execution time of Terasort benchmark by $60-63\%$ when compared to default settings
and by $40-60\%$ when compared to Starfish optimizer.
For Inverted Index benchmark the reduction is $~80\%$ when compared to default settings.
In the case of word co-occurrence, the observed reduction is $22\%$ when compared to default settings
and $2\%$ when compared to Starfish.

SPSA finds the optimal configuration while keeping the relation among the parameters in mind. 
For Terasort, a small value $(0.14)$ of \emph{io.sort.spill.percent} will generate a lot of spilled 
files of small size. Because of this, the value of \emph{io.sort.factor} has been increased to $475$ 
from the default value of $10$. This ensures the combination of number of spilled files 
to generate the partitioned and sorted files.
A small value of \emph{shuffle.input.buffer.percent} $(0.14)$ and a large value of 
\emph{inmem.merge.threshold} $(9513)$ may be confusing as both of them act as a threshold 
beyond which in-memory merge of files (output by map) is triggered.
However, map outputs a total bytes of $100$ GB and a total of $2,000,000,000$ files are spilled 
to disk which, effectively make each spilled file of size $50$ bytes. 
Thus filling $0.14\%$ of the memory allocated to Reduce makes $50$ bytes files of which there will be
$9513$. Default value of number of reducers (i.e., $1$) generally does not work in practical situations. 
However, increasing it to a very high number also creates an issue as it results in more network and disk overhead.
As can be observed in Table \ref{tab:opt_parameter},
mapred.compress.map.output is set to true for Terasort benchmark.
This is because, the output data of Map phase has same size as the input data (which might be huge).
Thus, in such scenarios, it is beneficial if the Map output is compressed.
Grep benchmark, on the other hand produces very little map output, and even smaller sized data to be shuffled. 
Hence \emph{io.sort.mb} value is reduced to $50$ from default $100$ (see Table \ref{tab:opt_parameter})
and number of reducers is set to $1$. Further, value of 
\emph{inmem.merge.threshold} has been reduced to $681$ from $1000$ as there is not much data to work on.

Bigram and Inverted Index are computationally expensive operations.
Hence \emph{io.sort.mb} is increased to $751$ and $617$ respectively. 
Both of these applications also generate a reasonable size of data during the map phase,
which implies a lot of spilled files are generated. 
Thus, \emph{inmem.merge.threshold} has been increased to $4201$ and $3542$ respectively.
% How benchmark performs with default, starfish and spsa

% \input{benchPerfComparev2}

\subsection{Advantages of SPSA}
\begin{table*}[t!]
 \small
 \centering
 \begin{tabular}{|c|c|c|c|c|c|}
 \hline
 Methods & Mathematical Model & Dimension & Parameter Dependency & Performance in Real system & Profiling Overhead\\
 \hline
 Starfish & \color{red}{\ding{55}} & \color{red}{\ding{55}} & \color{red}{\ding{55}} & \color{red}{\ding{55}} & \color{red}{\ding{51}}\\ \hline
 PPABS & \color{blue}{\ding{55}} & \color{blue}{\ding{55}} & \color{blue}{\ding{55}} & \color{blue}{\ding{55}} & \color{blue}{\ding{51}}\\ \hline
 SPSA & \ding{51} &\ding{51} & \ding{51} & \ding{51} & \ding{55}\\
  \hline
 \end{tabular}
  \caption{Starfish, PPABS and SPSA : Comparison of Approaches}
 \label{tab:rel}
\end{table*}

The above discussion indicates that SPSA performs well in optimizing Hadoop
parameters. We highlight other advantages (also see Table \ref{tab:rel}) of using 
our proposed method:
\begin{enumerate}
 \item Most of the profiling-based methods (Starfish, MROnline etc), 
 use the internal Hadoop(source code) structure to place ``markers'' for precisely profiling a job. 
 Starfish observes the time spent in each function by using btrace. 
 Small change in the source code make this unusable 
 (clearly Starfish only support Hadoop versions $< 1.0.3$).
 SPSA does not rely on the internal structure of hadoop and only observes the final execution time of the 
 job which can be accessed easily.
 
 \item \textbf{Independent of Hadoop version:} As mentioned previously, profiling-based methods
 are highly dependent on the MapReduce version and any changes in the source code of Hadoop will require
 a version upgrade of these methods. In contrast, our SPSA-based method does not rely on any specific 
 Hadoop version.
 
 \item \textbf{Pause and resume:} SPSA optimizes the parameters iteratively. It starts at a given point (default setting in our case) 
 and then progressively finds a better configuration (by estimating gradient). 
 Such a process can be paused at any iteration and then resumed using the same parameter configuration, where the iteration was stopped.
 This is unlike the profiling-based methods, which need to profile jobs in one go. 
  
 \item SPSA takes into consideration multiple values of execution time of a job for the same parameter setting
 (randomness in execution time).
 This is not the case in other methods, which profile a job only once. 
 Multiple observations helps SPSA to remove the randomness in the job which arise due to the underlying hardware.
 
 \item Parameters can be easily added and removed from the set of tunable parameters, 
 which make our method suitable for scenarios where the user wants to have control over the parameters to be tuned.

 \item \textbf{Profiling overhead:} Profiling of takes a long time (since job run time is not yet optimized 
 during profiling) which adds an extra overhead for these methods like Starfish, PPABS, MROnline etc. For e.g.,
 in our experiments, Starfish profiling executed for $4$ hours, $38$ minutes ($=16680$ seconds) in the case of Word co-occurrence
 benchmark on Wikipedia data of size $4$ GB. Also, Starfish profiled Terasort on $100$ GB of synthetic data for $>2$ hours.
 In contrast, our method does not incur additional ``profiling'' time.
 The SPSA \emph{optimization} time is justified, since each iteration results in learning a better parameter configuration.
\end{enumerate}

% The above highlights are also captured in Table \ref{tab:rel}.

% 
% \section{Result Discussion}
% \subsection{Profile of Benchmarks}
\section{Conclusions and Future Work}
\label{concl}
Hadoop framework presents the user with a large set of tunable parameters. 
Though default setting is known for these parameters, it is important to tune 
these parameters in order to obtain better performance. However, manual tuning 
of these parameters is difficult owing to the complex nature of the search space 
and the pronounced effect of cross-parameter interactions. This calls for an 
automatic tuning mechanism. Prior attempts at automatic tuning have adopted a 
mathematical model based approach and have resorted to parameter reduction prior 
to optimization. Since, Hadoop is continuously evolving, the mathematical models 
may fail for later versions and given the level of cross parameter interaction, 
it is a good idea to retain as many parameters as possible.

In this paper, we suggested a tuning method based on the simultaneous 
perturbation stochastic approximation (SPSA) algorithm. The salient features of 
the SPSA based scheme included its ability to use observations from a real system 
and its insensitivity to the number of parameters. Also, the SPSA algorithm took 
the cross-parameter interaction into account by computing the gradient at each 
point.
Using the SPSA scheme, we tuned as many as $11$ parameters and observed an 
improvement in execution time on real system. In particular, our experiments on 
benchmark applications such as \emph{Terasort}, \emph{Grep}, \emph{Bigram}, 
\emph{Word Co-occurrence} and 
\emph{Inverted Index} showed that the parameters obtained using the SPSA algorithm 
yielded a decrease of $45$-$66\%$ in execution times on a realistic 25 node cluster.

Our aim here was to introduce the practitioners to an algorithm which was 
different in flavor, simple to implement and was as effective as the previous 
methods.
In this work we considered only Hadoop parameters, however, the SPSA algorithm 
based tuning can include parameters from other layers such OS, System, Hardware 
etc. This will go a long way in providing a holistic approach to performance 
tuning of Hadoop MapReduce. Further, other simulation optimization algorithms like 
\cite{qn-sf,rdsa} can be applied to the problem of Hadoop parameter tuning.

% \appendix

%\small
\bibliographystyle{abbrv}
%\begingroup
\FloatBarrier
\bibliography{ref}

\begin{thebibliography}{10}

\bibitem{cloudera}
{Cloudera, 7 tips for improving MapReduce performance}.
\newblock
  \url{http://blog.cloudera.com/blog/2009/12/7-tips-for-improving-mapreduce-performance/}.

\bibitem{url_6}
{HDFS Architecture}.
\newblock
  \url{https://hadoop.apache.org/docs/r2.4.1/hadoop-project-dist/hadoop-hdfs/HdfsDesign.html}.

\bibitem{microsoft}
{Microsoft, Hadoop Job Optimization}.
\newblock \url{https://msdn.microsoft.com/en-us/dn197899.aspx}.

\bibitem{url_4}
{Parameter Manual}.
\newblock
  \url{https://hadoop.apache.org/docs/r2.6.0/hadoop-mapreduce-client/hadoop-mapreduce-client-core/mapred-default.xml}.

\bibitem{puma}
{PUMA Benchmarks and Dataset downloads}.
\newblock \url{https://engineering.purdue.edu/~puma/datasets.htm}.

\bibitem{uav}
C.~Antal, O.~Granichin, and S.~Levi.
\newblock Adaptive autonomous soaring of multiple uavs using simultaneous
  perturbation stochastic approximation.
\newblock In {\em 49th IEEE Conference on Decision and Control (CDC)}, pages
  3656--3661. IEEE, 2010.

\bibitem{Babu10towardsautomatic}
S.~Babu.
\newblock Towards automatic optimization of mapreduce programs.
\newblock In {\em In SoCC}, pages 137--142, 2010.

\bibitem{SA}
V.~S. Borkar.
\newblock {\em Stochastic Approximation: A Dynamical Systems Viewpoint}.
\newblock TRIM, 2008.

\bibitem{remotesensing}
A.~A. Cole-Rhodes, K.~L. Johnson, J.~LeMoigne, and I.~Zavorin.
\newblock Multiresolution registration of remote sensing imagery by
  optimization of mutual information using a stochastic gradient.
\newblock {\em Image Processing, IEEE Transactions on}, 12(12):1495--1511,
  2003.

\bibitem{ghemawat}
J.~Dean and S.~Ghemawat.
\newblock {M}ap{R}educe: {S}implified {D}ata {P}rocessing on {L}arge
  {C}lusters.
\newblock {\em Commun. ACM}, 51(1):107--113, Jan 2008.

\bibitem{jellyfish}
X.~Ding, Y.~Liu, and D.~Qian.
\newblock Jellyfish: Online performance tuning with adaptive configuration and
  elastic container in hadoop yarn.
\newblock In {\em Parallel and Distributed Systems (ICPADS), 2015 IEEE 21st
  International Conference on}, pages 831--836. IEEE, 2015.

\bibitem{Ganapathi:EECS-2009-181}
A.~S. Ganapathi.
\newblock {\em Predicting and Optimizing System Utilization and Performance via
  Statistical Machine Learning}.
\newblock PhD thesis, EECS Department, University of California, Berkeley, Dec
  2009.

\bibitem{satellite}
L.~Hao and M.~Yao.
\newblock {SPSA}-based step tracking algorithm for mobile \{DBS\} reception.
\newblock {\em Simulation Modelling Practice and Theory}, 19(2):837 -- 846,
  2011.

\bibitem{journals/pvldb/HerodotouB11}
H.~Herodotou and S.~Babu.
\newblock Profiling, what-if analysis, and cost-based optimization of mapreduce
  programs.
\newblock {\em PVLDB}, 4(11):1111--1122, 2011.

\bibitem{starfish}
H.~Herodotou, H.~Lim, G.~Luo, N.~Borisov, L.~Dong, F.~B. Cetin, and S.~Babu.
\newblock Starfish: A self-tuning system for big data analytics.
\newblock In {\em In CIDR}, pages 261--272, 2011.

\bibitem{airlines}
D.~W. Hutchison and S.~D. Hill.
\newblock Simulation optimization of airline delay with constraints.
\newblock In {\em Simulation Conference, 2001. Proceedings of the Winter},
  volume~2, pages 1017--1022. IEEE, 2001.

\bibitem{kambatla}
K.~Kambatla, A.~Pathak, and H.~Pucha.
\newblock Towards optimizing hadoop provisioning in the cloud.
\newblock {\em HotCloud}, 9:12, 2009.

\bibitem{cmureport}
S.~Kavulya, J.~Tan, R.~Gandhi, and P.~Narasimhan.
\newblock An analysis of traces from a production mapreduce cluster.
\newblock In {\em Proceedings of the 2010 10th IEEE/ACM International
  Conference on Cluster, Cloud and Grid Computing}, CCGRID '10, pages 94--103,
  Washington, DC, USA, 2010. IEEE Computer Society.

\bibitem{kushner}
D.~Kushner, H.J.and~Clark.
\newblock {\em Stochastic Approximation Methods for Constrained and
  Unconstrained Systems}.
\newblock Springer-Verlag New York, 1978.

\bibitem{qn-sf}
K.~Lakshmanan and S.~Bhatnagar.
\newblock Quasi-newton smoothed functional algorithms for unconstrained and
  constrained simulation optimization.
\newblock {\em Computational Optimization and Applications}, pages 1--24, 2016.

\bibitem{aroma}
P.~Lama and X.~Zhou.
\newblock Aroma: Automated resource allocation and configuration of mapreduce
  environment in the cloud.
\newblock In {\em Proceedings of the 9th International Conference on Autonomic
  Computing}, ICAC '12, pages 63--72, New York, NY, USA, 2012. ACM.

\bibitem{mronline}
M.~Li, L.~Zeng, S.~Meng, J.~Tan, L.~Zhang, A.~R. Butt, and N.~Fuller.
\newblock {MRO}nline: {M}ap{R}educe {O}nline {P}erformance {T}uning.
\newblock In {\em Proceedings of the 23rd International Symposium on
  High-performance Parallel and Distributed Computing}, HPDC '14, pages
  165--176, New York, NY, USA, 2014. ACM.

\bibitem{chen96}
Y.~Li and H.-F. Chen.
\newblock Robust adaptive pole placement for linear time-varying systems.
\newblock {\em IEEE transactions on automatic control}, 41(5):714--719, 1996.

\bibitem{pavlo}
A.~Pavlo, E.~Paulson, A.~Rasin, D.~J. Abadi, D.~J. DeWitt, S.~Madden, and
  M.~Stonebraker.
\newblock A comparison of approaches to large-scale data analysis.
\newblock In {\em Proceedings of the 2009 ACM SIGMOD International Conference
  on Management of Data}, SIGMOD '09, pages 165--178, New York, NY, USA, 2009.
  ACM.

\bibitem{tsc}
L.~Prashanth and S.~Bhatnagar.
\newblock Threshold tuning using stochastic optimization for graded signal
  control.
\newblock {\em Vehicular Technology, IEEE Transactions on}, 61(9):3865--3880,
  2012.

\bibitem{rdsa}
L.~A. Prashanth, S.~Bhatnagar, M.~C. Fu, and S.~Marcus.
\newblock Adaptive system optimization using random directions stochastic
  approximation.
\newblock {\em IEEE Transactions on Automatic Control}, PP(99):1--1, 2016.

\bibitem{spall}
J.~C. Spall.
\newblock Multivariate stochastic approximation using a simultaneous
  perturbation gradient approximation.
\newblock {\em Automatic Control, IEEE Transactions on}, 37(3):332--341, 1992.

\bibitem{Spall92multivariatestochastic}
J.~C. Spall.
\newblock Multivariate stochastic approximation using a simultaneous
  perturbation gradient approximation.
\newblock {\em IEEE Transactions on Automatic Control}, 37:332--341, 1992.

\bibitem{chen97}
Q.-Y. Tang, H.-F. Chen, and Z.-J. Han.
\newblock Convergence rates of perturbation-analysis-robbins-monro-single-run
  algorithms for single server queues.
\newblock {\em IEEE Transactions on Automatic Control}, 42(10):1442--1447, Oct
  1997.

\bibitem{yarn}
V.~K. Vavilapalli, A.~C. Murthy, C.~Douglas, S.~Agarwal, M.~Konar, R.~Evans,
  T.~Graves, J.~Lowe, H.~Shah, S.~Seth, B.~Saha, C.~Curino, O.~O'Malley,
  S.~Radia, B.~Reed, and E.~Baldeschwieler.
\newblock Apache {H}adoop {YARN}: {Y}et {A}nother {R}esource {N}egotiator.
\newblock In {\em Proceedings of the 4th Annual Symposium on Cloud Computing},
  SOCC '13, pages 5:1--5:16, New York, NY, USA, 2013. ACM.

\bibitem{hadoop}
T.~White.
\newblock {\em {H}adoop: {T}he {D}efinitive {G}uide, {S}torage and {A}nalysis
  at {I}nternet {S}cale, 4th {E}dition}.
\newblock O'Reilly Media, March 2015.

\bibitem{ppabs}
D.~Wu and A.~S. Gokhale.
\newblock A self-tuning system based on application profiling and performance
  analysis for optimizing hadoop mapreduce cluster configuration.
\newblock In {\em 20th Annual International Conference on High Performance
  Computing, HiPC, 2013}, pages 89--98. {IEEE} Computer Society, 2013.

\end{thebibliography}
 %   \setlength{\bibsep}{1pt}
%\endgroup
\end{document}